\let\csname equation*\endcsname\relax
\let\csname endequation*\endcsname\relax
\newtheorem{thm}{Theorem}
\newtheorem{cor}[thm]{Corollary}
\theoremstyle{remark}
\theoremstyle{definition}
\newtheorem{dfn}[thm]{Definition}
\newtheorem{example}{Example}
\newcommand{\tensormax}{\otimes_{\mathrm{max}}}
\newcommand{\PPR}[0]{P^{\text{PR}}}
\newcommand{\PNL}[0]{P^{\text{NL}}}
\newcommand{\Pc}[0]{P^{\text{L}}}
\newcommand{\bra}[1]{\left\langle#1\right|} 
\newcommand{\ket}[1]{\left|#1\right\rangle} 
\newcommand{\id}{\mathds{1}}
\newcommand{\wuerzburg}{\address{$^1$ Fakult{\"{a}}t f\"{u}r Physik und Astronomie, Universit\"{a}t W\"{u}rzburg, Am Hubland, 97074 W\"{u}rzburg, Germany}}
\newcommand{\potsdam}{\address{$^2$ Institute for Physics and Astronomy, Potsdam University, 14476 Potsdam, Germany}}
\newcommand{\bristolmath}{\address{$^3$ Department of Mathematics, University of Bristol, University Walk, Bristol BS8 1TW, United Kingdom}}
\newcommand{\rhul}{\address{$^4$ Department of Mathematics, Royal Holloway, University of London, Egham Hill, Egham TW20 0EX, United Kingdom}}
\newcommand{\bristol}{\address{$^5$ H.H. Wills Physics Laboratory, University of Bristol, Bristol BS8 1TL, United Kingdom}}
\begin{document}
\title{Limits on non-local correlations from the structure of the local state space}
\author{Peter Janotta$^1$, Christian Gogolin$^{1,2,3}$, Jonathan Barrett$^{4,5}$\\ and Nicolas Brunner$^5$}
\wuerzburg
\potsdam
\bristolmath
\rhul
\bristol
\eads{peter.janotta@physik.uni-wuerzburg.de}


\begin{abstract}
The outcomes of measurements on entangled quantum systems can be nonlocally correlated. However, while it is easy to write down toy theories allowing arbitrary nonlocal correlations, those allowed in quantum mechanics are limited. Quantum correlations cannot, for example, violate a principle known as macroscopic locality, which implies that they cannot violate Tsirelson's bound. This work shows that there is a connection between the strength of nonlocal correlations in a physical theory, and the structure of the state spaces of individual systems. This is illustrated by a family of models in which local state spaces are regular polygons, where a natural analogue of a maximally entangled state of two systems exists. We characterize the nonlocal correlations obtainable from such states. The family allows us to study the transition between classical, quantum, and super-quantum correlations, by varying only the local state space. We show that the strength of nonlocal correlations - in particular whether the maximally entangled state violates Tsirelson's bound or not - depends crucially on a simple geometric property of the local state space, known as strong self-duality. This result is seen to be a special case of a general theorem, which states that a broad class of entangled states in probabilistic theories - including, by extension, all bipartite classical and quantum states - cannot violate macroscopic locality. Finally, our results show that there exist models which are locally almost indistinguishable from quantum mechanics, but can nevertheless generate maximally nonlocal correlations.
\end{abstract}
\pacs{75.10.Pq,	03.65.Ud, 03.67.-a}

\maketitle

\section{Introduction}

Nonlocality is a key feature of quantum mechanics. By performing measurements on separated systems in an entangled state, one can obtain correlations that are stronger than those of any local model, as witnessed by the violation of Bell inequalities \cite{bell64}. On the other hand, sets of nonlocal correlations are known that are stronger than those of quantum mechanics, but which do not allow for instantaneous signalling. This led Popescu and Rohrlich \cite{PR} to raise the question of why nonlocality seems to be limited in nature.

In recent years, new insights have been gained into this question by studying the information theoretic properties of super-quantum correlations. For instance these correlations lead to implausible reductions for all communication complexity problems, such that they can be solved with only constant communication \cite{vanDam,brassard,BS}. The principle of \emph{information causality} \cite{IC} is satisfied by quantum correlations, but can be violated if certain super-quantum correlations are available --- similarly the principle of \emph{macroscopic locality} \cite{ML}. Various multi-player games have been described, for which super-quantum correlations would provide an advantage over quantum correlations \cite{noah,GYNI}.

The above studies focused on the information theoretic power of correlations without any reference to the physical theories they emerge from.
Recent works revealed interesting connections between the structure of quantum mechanics and the nonlocal correlations that can be generated by quantum systems. Barnum et al. \cite{Beigi09}, for example, considered a theory that is locally equivalent to quantum mechanics but whose non-locality is only limited by the no-signalling principle. Despite this theory being less restrictive than quantum mechanics, the set of bipartite correlations that can be obtained is identical to that of quantum states. This implies that, despite the fact that quantum correlations are clearly a global property of joint systems, their limitation does not result from the lack of joint states, but rather from the structure of the local state spaces. Meanwhile, Ac\'{i}n et al. \cite{Acin10} have shown that this result does not extend to three or more parties.

In this paper, we show that the connection between local state spaces and the limitation of bipartite nonlocal correlations is actually a more general phenomenon. In particular, if local state spaces have a property known as \emph{strong self-duality}, then the correlations obtainable from maximally entangled states must be compatible with the principle of macroscopic locality. It follows that they must also respect Tsirelson's bound. A precise definition of strong self-duality is given later, but in the quantum case it corresponds roughly to the fact that the same rank one projector represents both a pure state and the outcome of a measurement which identifies that state.

\begin{figure}[bt]
 \centering
 \includegraphics[width=0.75 \linewidth]{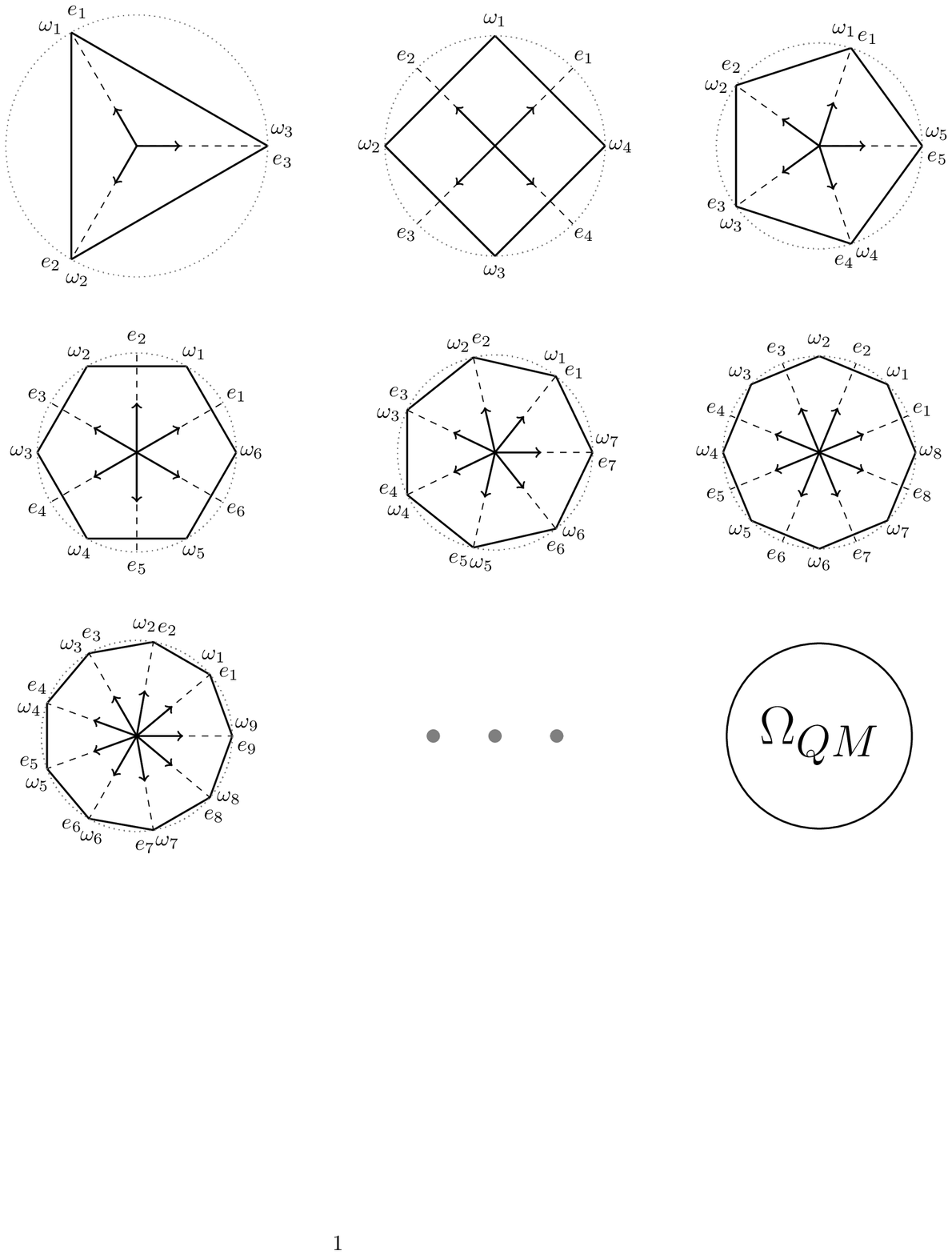}
 \caption{Illustration of the state spaces and ray extremal effects of the polygon models.}
 \label{model}
\end{figure}

By way of illustration, we introduce along the way a family of models, where each model is defined by the local state space for a single system, and the state space is taken to be a regular polygon with $n$ vertices (see \fref{model}). For two such systems, there is a natural analogue of a maximally entangled state. The family includes the classical case of two trits ($n=3$); systems generating the super-quantum correlations introduced by Popescu and Rohrlich ($n=4$); and systems producing quantum correlations ($n \rightarrow \infty$). Thus the family allows us to study the transition between these theories, and the bipartite correlations that can be produced by a maximally entangled state, by modifying only the local state space. For high $n$ the local state spaces are almost indistinguishable from a quantum system. Nevertheless it turns out that these models show dramatically different correlations --- and thereby have fundamentally different information theoretic capabilities --- depending on the parity of $n$. This is explained by the fact that those with odd $n$ are \emph{strongly self-dual}, while those with even $n$ only \emph{weakly self-dual}.

One way of viewing the polygon models is that moving from $n\rightarrow \infty$ to $n=3$, there is a progressive weakening of the superposition principle. A weakened superposition principle means that states can only be superposed in certain combinations. In a similar spirit, a different range of models was introduced in Ref.~\cite{UR}, with each model defined by a relaxation of the uncertainty relations of quantum mechanics. Here too, a transition from quantum correlations to Popescu-Rohrlich correlations was observed. 

This paper is organized as follows. Section \ref{operationalmodels} gives a brief, not too technical, introduction to a mathematical formalism in which a very broad range of  probabilistic theories can be expressed, including quantum theory and classical probability theory. Section \ref{sec:afamilyofmodels} introduces the polygon models, and by investigating the properties of bipartite correlations, sheds some light on the relation between these and the local state space structure. Section \ref{sec:selfdualityandtsirelsonsbound} returns to the general case and contains the proof of the main theorem, which establishes a rigorous limit on the nonlocal correlations obtainable from a broad class of bipartite states in general probabilistic theories. In particular, states obtainable by norm-preserving local transformations from what we call \emph{inner product states} cannot violate the principle of macroscopic locality. Section \ref{polygonsrevisited} provides a formal definition of strong and weak self-duality, and discusses consequences of the main theorem for the correlations in bipartite polygon systems. Section \ref{sec:Correlationsoutsideofq1} presents a strongly self-dual system in which a non-maximally entangled state gives rise to correlations that cannot be obtained from any inner product state. Finally, section \ref{discussion} discusses some open questions.

\section{Operational models}\label{operationalmodels}

\subsection{Systems and measurements}

This section describes briefly the framework of generalized probabilistic theories \cite{barrett}, using the notation and conventions of Ref.~\cite{Barnum}. The aim is to be able to describe theoretical models other than the classical and quantum theories, and for these two to be included as special cases.

We start by taking an operational point of view. A \emph{state} of a system is a mathematical object that defines the outcome probabilities for all the measurements that can possibly be performed on this system. The \emph{state space} $\Omega$ of a system is the set of states that it can be prepared in. 

By defining the operations of summation and multiplication by a real number on states, we can identify $p \omega_1 + (1-p) \omega_2$ as the probabilistic mixture obtained by preparing $\omega_1$ with probability $p$ and $\omega_2$ with probability $1-p$. The state space $\Omega$ is now a convex set, embedded in a real vector space $V$. For simplicity, assume that $\Omega$ is compact and finite dimensional. States that can be represented by convex combinations of other states are \emph{mixed} states. The extremal points of the state space $\Omega$ cannot be written in such a form, and are \emph{pure} states. For a quantum system, for example, $\Omega$ is the set of density operators on a Hilbert space, and the pure states are the rank one projectors. For a qubit, $\Omega$ is particularly easy to visualize, since it corresponds to the Bloch ball, with pure states on the surface of the ball. For a (finite-dimensional) classical system, $\Omega$ is the set of probability distributions over some finite sample space.

A measurement outcome is represented by an \emph{effect}, that is a map $e\colon \Omega \to [0,1]$, where $e(\omega)$ is the probability of obtaining the outcome $e$ when the measurement is performed on a system in the state $\omega$. Probabilities of measurement outcomes should respect probabilistic mixtures of states, meaning that $e[p\,\omega_1 + (1-p) \, \omega_2] = p\, e(\omega_1) + (1-p)\, e(\omega_2)$, i.e., the effects are affine maps. A special effect is the \emph{unit effect} $u$, which is uniquely defined such that $u(\omega)=1$ for all $\omega \in \Omega$. The unit effect represents a measurement with a single outcome that is certain to occur regardless of what the state is. An arbitrary measurement is a set of effects $\{e_i\}$ summing to the unit effect $\sum_i e_i = u$. This ensures that outcome probabilities of measurements sum to one.

The set of proper effects $E(\Omega) = \{e : 0 \leq e(\omega) \leq 1\ \forall \omega \in \Omega\}$ is the convex hull of the unit effect, the zero effect and a set of extremal effects. For a quantum system, if states are density operators on a Hilbert space, then effects can be identified with positive semidefinite operators on the Hilbert space, in such a way that outcome probabilities are given by the usual trace rule. Measurements correspond to positive operator-valued measures. For a classical system, effects can be identified with fuzzy indicator functions on the sample space, i.e., maps from the sample space into $[0,1]$.

\subsection{Unnormalized states}
It is frequently useful to work with unnormalized states. Given a state space $\Omega$ and effect space $E(\Omega)$, let $V$ be the linear span of $\Omega$. The linear span of $E(\Omega)$ is then the dual space $V^*$. Both $V$ and $V^*$ are real vector spaces. In the case of a quantum system, for example, $V$ is the linear span of the density operators, which is the set of all Hermitian operators on the corresponding Hilbert space. Similarly, $V^*$ is the linear span of the positive semidefinite operators, which is also the set of all Hermitian operators.

An unnormalized state is an element of $V$ of the form $r\,\omega$, with $r>0$ and $\omega\in\Omega$. The set of all unnormalized states is a cone denoted $V_+$. Similarly, an unnormalized effect is an element of $V^*$ of the form $r\,e$ for $r>0$ and $e \in E(\Omega)$. The set of unnormalized effects is the \emph{dual cone} to $V_+$, denoted $V_+^*$. The cone $V_+$ and the dual cone $V_+^*$ are related via
\begin{equation}
 V_+^* = \{ e \in V^*: e(\omega) \geq 0, \forall \omega \in V_+\}.
\end{equation}
In the case of a quantum system, both $V_+$ and $V_+^*$ can be identified with the set of positive semidefinite operators on the Hilbert space.
In general a cone $V_+$ can have a very different structure than its dual cone $V_+^*$, e.g., they may have a different number of extremal rays. 

\subsection{Bipartite states}\label{bipartitestates}
Given two systems $A$ and $B$, an operational model needs to specify the set $\Omega^{AB}$ of available joint states, in addition to the individual state spaces $\Omega^A$ and $\Omega^B$. In general, one can imagine many weird and wonderful ways in which two systems might combine to form a joint system. By imposing two quite natural conditions, however, one can narrow down these possibilities significantly.

The first condition is the \emph{no-signalling principle}, which says that it should not be possible to send messages instantaneously by performing measurements on the separate parts of a joint system. The second is that of \emph{local tomography}. Given a single system, call a measurement \emph{informationally complete} if its outcome probabilities are sufficient to determine uniquely the state of the system. The principle of local tomography states that if an informationally complete measurement is performed separately on each of the subsystems of a composite system, then the joint outcome probabilities are sufficient to determine uniquely the state of the joint system.  

These two conditions together are sufficient to ensure that the linear space $V^{AB}$ in which the joint state space $\Omega^{AB}$ and the cone of associated unnormalized states are embedded can be taken to be $V^A\otimes V^B$ (see for example Ref.~\cite{Barnum} and the references therein). If simultaneous measurements are performed on systems $A$ and $B$, then the joint probability for outcomes $e$ and $f$ is given by $(e\otimes f)(\omega^{AB})$.

It is convenient to define the unit effect of the joint state space as $u^{AB} = u^A \otimes u^B$ such that a joint state is normalized if
\begin{equation}\label{jointstatenorm}
(u^A \otimes u^B)(\omega^{AB})  = 1,
\end{equation}
where $u^A$ and $u^B$ are the unit effects for systems $A$ and $B$ respectively.
Naturally, probabilities are positive, so a joint state must satisfy
\begin{equation}\label{eq:tensormax}
(e^A\otimes e^B)(\omega^{AB}) \geq 0
\end{equation}
for all $e^A\in E(\Omega^A)$, $e^B\in E(\Omega^B)$.
\begin{dfn}
The \emph{maximal tensor product} of $\Omega^A$ and $\Omega^B$, denoted $\Omega^A \tensormax \Omega^B$, is the set of all $\omega^{AB} \in V^A\otimes V^B$ such that \eref{jointstatenorm} and \eref{eq:tensormax} are satisfied.
\end{dfn}

It is easy to check that the no-signalling principle is indeed satisfied for such an $\Omega^{AB}$.
Consider two measurements on $A$, corresponding to sets of effects $x = \{ e_1,\ldots, e_m \}$ and $x' = \{ e'_1,\ldots, e'_n \}$. The marginal probability for an outcome $f$ of a measurement on $B$ is
\begin{equation}
\sum_{i=1}^m (e_i\otimes f)(\omega^{AB}) = (u^A\otimes f)(\omega^{AB}) = \sum_{j=1}^n (e'_j\otimes f) (\omega^{AB}),
\end{equation}
i.e., it is independent of whether $x$ or $x'$ is performed on $A$.

Intuitively, the maximal tensor product is the set of all non-signalling joint states that can be written down for two systems, given the individual state spaces $\Omega^A$ and $\Omega^B$. A particular theory or model need not assume that every element of the maximal tensor product is an allowed state for the joint system. In general, a model will specify a joint state space $\Omega^{AB}$ which is a subset of $\Omega^A\tensormax\Omega^B$. 

Straightforwardly generalizing the notions well known from quantum theory, one calls a state a \emph{product state} if it can be written in the form $\omega^A \otimes \omega^B$ for some states $\omega^A \in \Omega^A$ and $\omega^B \in \Omega^B$. States that can be written as probabilistic mixtures of product states are \emph{separable}, while states that are not separable are \emph{entangled}. 

This work mostly considers correlations obtained from product measurements on bipartite states. The general formalism, however, does not assume that all measurements on composite systems are product measurements. As in the case of single systems, outcomes of measurements on a composite system correspond to effects, where these are maps $\Omega^{AB}\rightarrow [0,1]$. The set of all such effects is written $E(\Omega^{AB})$, and may include entangled, as well as product, effects. However, $E(\Omega^A\tensormax\Omega^B)$ only contains separable effects.

Quantum theory provides a useful example of many of the concepts above. In this case, $\Omega^{AB}$ is the set of density operators on the Hilbert space $H^{AB} = H^A \otimes H^B$. Recall that $V^A$ and $V^B$ are real vector spaces of Hermitian operators on $H^A$ and $H^B$ respectively. The set of Hermitian operators on $H^{AB}$ can be identified with $V^A\otimes V^B$, so the joint quantum states are indeed elements of $V^A\otimes V^B$. The density operators on $H^{AB}$ are a proper subset of $\Omega^A\tensormax\Omega^B$. Elements of $\Omega^A\tensormax\Omega^B$ which are not density operators are (normalized) \emph{entanglement witnesses}. An entanglement witness $w$ is locally positive, meaning that for all product measurements, $(e^A\otimes e^B)(w) \geq 0$. But $w$ is not a density operator, since there are entangled measurement outcomes $e$ with $e(w) < 0$.

\section{A family of models}\label{sec:afamilyofmodels}

\subsection{Polygon systems}\label{polygonsystems}
This section defines a family of models such that the state spaces $\Omega$ of single systems are regular polygons with $n$ vertices. It is convenient to represent both states and effects by vectors in $\mathbb{R}^3$ such that $e(\omega)$ is the usual Euclidean inner product. For fixed $n$, let $\Omega$ be the convex hull of $n$ pure states $\{\omega_i\}$, $i=1,...,n$, with
\begin{equation}\label{eq:localpolygons}
 \omega_i = \begin{pmatrix}
             r_n \cos(\frac{2 \pi i}{n})\\
             r_n \sin(\frac{2 \pi i}{n})\\
             1
            \end{pmatrix} \in \mathbb{R}^3 ,
\end{equation}
where $r_n= \sqrt{\sec(\pi/n)}$.

The unit effect is
\begin{equation}
   u = \begin{pmatrix}
        0\\
        0\\
        1
       \end{pmatrix}.
\end{equation}
In the case of even $n$, the set $E(\Omega)$ of all possible measurement outcomes is the convex hull of the zero effect, the unit effect, and $e_1,\ldots, e_n$, with
\begin{equation}
  \label{eff_even}
  e_i = \frac{1}{2} \, \begin{pmatrix}
    r_n \cos(\frac{(2 i-1) \pi}{n})\\
    r_n \sin(\frac{(2 i-1) \pi}{n})\\
    1
  \end{pmatrix}  .
\end{equation}
Let $\bar{e_i} = u - e_i$, hence a possible dichotomic measurement is $\{e_i, \bar{e_i} \}$. When this measurement is performed on a system in the state $\omega_j$, the probabilities for the two outcomes are given by $e_i\cdot \omega_j$ and $\bar{e_i}\cdot \omega_j$, and satisfy $e_i\cdot \omega_j + \bar{e_i}\cdot \omega_j = 1$. Observe that for even $n$, $\bar{e_i} = e_{(i+n/2) \mathrm{mod} \ n}$.

The case of odd $n$ is slightly different. In this case, define
\begin{equation}
  \label{eff_odd}
  e_i = \frac{1}{1 + {r_n}^2} \, \begin{pmatrix}
    r_n \cos(\frac{2 \pi i}{n})\\
    r_n \sin(\frac{2 \pi i}{n})\\
    1
  \end{pmatrix} 
\end{equation}
and again let $\bar{e_i} = u - e_i$, so that a possible dichotomic measurement is $\{e_i, \bar{e_i} \}$. This time, however, $\bar{e_i}$ does not equal $e_j$ for any $j$. The set $E(\Omega)$ of all possible measurement outcomes is the convex hull of the zero effect, the unit effect, $e_1\ldots,e_n$, and $\bar{e_1},\ldots,\bar{e_n}$.
As can be seen in \fref{fig:poly3d} in such theories there are effects that are extremal in $E(\Omega)$ (namely the $\bar{e_i}$) but not ray extremal, i.e., they do not lie on an extremal ray of the cone $V_+^*$.
This also happens in quantum mechanics, but only if the dimension of the Hilbert space is larger than two.
For example the effect $\id - | \psi \rangle\langle \psi |$ for any rank one projector $| \psi \rangle\langle \psi |$ is then extremal in the set of proper effects, but not ray extremal.

A two-dimensional illustration of the state and effect spaces is given in \fref{model} and a three-dimensional illustration in \fref{fig:poly3d}. 
\begin{figure}[tb]
 \centering
 \includegraphics[angle=-90,width=0.75\linewidth]{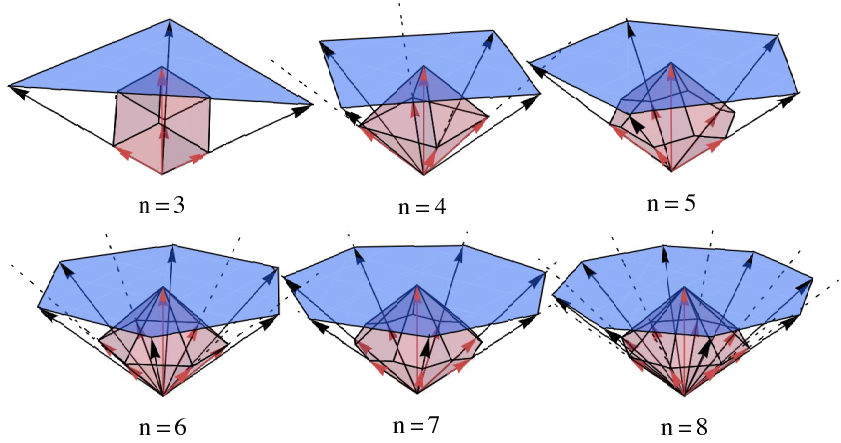}
 \caption{State spaces $\Omega$ (blue polygons) and sets of proper effects $E(\Omega)$ (red polytopes) of the polygon toy theories with $n$ vertices. The case $n=3$ corresponds to a classical system, the $n=4$ system is capable of generating all no-signalling correlations. In the limit $n\to\infty$ the state space becomes a disc, which can be thought of as the equatorial plane of the Bloch ball.}
 \label{fig:poly3d}
\end{figure}

The $n=3$ case corresponds to a classical system with three pure states. Think of it as a trit. The three pure states are $\omega_1$, $\omega_2$ and $\omega_3$, and correspond to the three different possible values of the trit. The state space $\Omega$ is a triangle. A generic point in $\Omega$ is a mixture of the three pure states and corresponds to a probability distribution over the three trit values. Notice that in this case, $e_1+e_2+e_3 = u$, hence a possible measurement is a three-outcome measurement with outcomes $e_1,e_2$ and $e_3$. This is the obvious measurement that simply reads off the value of the trit. Below we shall consider bipartite states of polygon systems. Given two trits, the only possible joint states are separable, and it is not possible to produce nonlocal correlations.
The case $n=4$ corresponds to a single system in a toy theory known as `box world', which has been discussed elsewhere in the literature (see for instance Ref.~\cite{barrett}). The state space is a square. As shown below, a notable feature of box world is that given two of these systems, it is possible to construct joint states that are more nonlocal than quantum states. In fact, an entangled state of two of the $n=4$ systems can produce maximally nonlocal correlations known as \emph{PR box} correlations \cite{PR}, which have been much explored in the literature \cite{vanDam, brassard, IC, noah}.

As $n \rightarrow\infty$, the state space tends to a disc of radius one. This makes it similar to a quantum mechanical qubit, whose state space is the Bloch ball. The disc can be thought of as the equatorial plane of the Bloch ball. We will refer to this case, somewhat loosely, as the quantum case.

\subsection{Bipartite states of polygon systems}
\label{sec:defpolybox}
We shall not attempt a complete characterization of the set of all possible non-signalling states $\Omega^A \tensormax \Omega^B$ for each value of $n$. Instead, this section describes a particular joint state of two polygon systems, which is the natural analogue of a maximally entangled state of two qubits. The next section examines the nonlocal correlations that can be obtained from performing measurements on these maximally entangled polygon systems.

Recall that a joint state is an element of $V^A\otimes V^B$, hence in the case of two polygon systems, a joint state is an element of $\mathbb{R}^3 \otimes \mathbb{R}^3 = \mathbb{R}^9$. It is convenient to represent the joint state as a $3\times3$ matrix such that $(e_i \otimes e_j)(\omega^{AB})$ can be calculated by simply left and right multiplying this matrix with the representations of the effects $e_i$ and $e_j$ in $\mathbb{R}^3$. Define
\begin{eqnarray}\label{definitionphi}
\mathrm{odd \ n:}\quad\phi^{AB} &=& \left(\begin{array}{ccc} 1 & 0 & 0 \\ 0 & 1 & 0 \\ 0 & 0 & 1 \end{array}\right), \nonumber\\
\mathrm{even \ n:}\quad\phi^{AB} &=& \left(\begin{array}{ccc} \cos(\pi/n) & \sin(\pi/n) & 0 \\ -\sin(\pi/n) & \cos(\pi/n) & 0 \\ 0 & 0 & 1 \end{array}\right). 
\end{eqnarray}

The state $\phi^{AB}$ is the natural analogue of a quantum mechanical maximally entangled state for the following reasons. First, it can be verified (see, e.g., Ref.~\cite{steering}) that except for $n=3$, $\phi^{AB}$ is an entangled pure state, where pure means that it is extremal in the maximal tensor product, hence cannot be written as a mixture of other non-signalling states. The $n=3$ case corresponds to two classical trits, with $\phi^{AB}$ the maximally correlated state, i.e., if the trit values are $1$, $2$, $3$, then $\phi^{AB}$ corresponds to $P(11)=P(22)=P(33)=1/3$. Second, $\phi^{AB}$ is constructed so that if a measurement is performed on the $A$ system, and outcome $e_i$ obtained, then the updated (or collapsed) state for the $B$ system is $\omega_i$. The marginal probability for Alice to obtain outcome $e_i$ is the same for all $i$. Compare this with the case of two spin-1/2 particles in the state $1/\sqrt{2} (|00\rangle + |11\rangle)$, where $|0\rangle$ and $|1\rangle$ are the eigenstates of spin-$z$. If a spin measurement in direction $\vec{m}$ in the $xz$-plane is performed on system $A$, then the probability of obtaining the up outcome is $1/2$, and if the up outcome is obtained, then the collapsed state of the $B$ system is spin up in direction $\vec{m}$. These quantum predictions are recovered by $\phi^{AB}$ in the limit $n \rightarrow \infty$.

The following sections investigate the nonlocal correlations that can be produced by performing measurements on two systems in the state $\phi^{AB}$. For this it is useful to have an expression for the joint probability of obtaining outcome $e^A_i$ on system $A$ and $e^B_j$ on system $B$. This is easy to calculate from \eref{definitionphi}. For even $n$,
\begin{equation}\label{eq:nevencorrelations}
(e^A_i \otimes e^B_j)(\phi^{AB}) = \frac{1}{4}\left( 1+r_n^2 \cos(\alpha_i-\beta_j)\right),
\end{equation}
where $\alpha_i = \frac{2\pi i}{n}$ and $\beta_j = \frac{(2j-1)\pi}{n}$, and as before, $r_n = \sqrt{\sec(\pi/n)}$. For odd $n$
\begin{equation}\label{eq:noddcorrelations}
(e^A_i \otimes e^B_j)(\phi^{AB}) = \frac{1}{(1+r_n^2)^2}\left( 1+r_n^2 \cos(\alpha_i-\beta_j)\right),
\end{equation}
where $\alpha_i = \frac{2\pi i}{n}$ and $\beta_j = \frac{2\pi j}{n}$. Notice the cosine dependence, which is reminiscent of quantum mechanical correlations.

\subsection{The Clauser-Horne-Shimony-Holt inequality}\label{sectionchsh}
One commonly used measure of the degree of nonlocality that a bipartite system exhibits is the maximal violation of the Clauser-Horne-Shimony-Holt (CHSH) inequality \cite{chsh}. The CHSH inequality involves two parties, conventionally called Alice and Bob. Each chooses between two dichotomic measurements. Let Alice's choice of measurement be $x$, and Bob's $y$, with  $x,y \in \{0,1\}$. Denote the measurement outcomes $a,b \in \{0,1\}$. A set of correlations is characterized by the joint probability distribution $P(a,b|x,y)$. The strength of the correlations is quantified by the CHSH parameter
\begin{equation}\label{chshparameter}
S = | E_{0,0}+ E_{0,1} + E_{1,0} - E_{1,1}| ,
\end{equation}
where $E_{x,y} = P(0,0|x,y)+P(1,1|x,y)-P(0,1|x,y)-P(1,0|x,y)$. As CHSH showed, local correlations must satisfy $S\leq 2$. In quantum mechanics, correlations can violate this inequality, but must respect Tsirelson's bound $S\leq 2 \sqrt{2}$ \cite{cirelson80}. 

By inspection, the algebraic maximum of $S$ is $4$, and it is easy to see that it is attained by the following correlations:
\begin{equation}\label{PR1}
P(a,b|x,y) =
\begin{cases}
\frac{1}{2} & \text{if $ a \oplus b = xy$} \\
0 & \text{otherwise}.
\end{cases}
\end{equation}
Here, $\oplus$ denotes addition modulo $2$.
These correlations were described by Popescu and Rohrlich, who pointed out that they are maximally nonlocal, yet still respect the no-signalling principle \cite{PR}. Since they cannot occur in quantum mechanics, they are imagined to be produced by a fictitious device, which is often referred to as a \emph{PR box}. As discussed in the introduction, PR boxes have been explored in the literature and are known to be particularly powerful for certain kinds of information theoretic problem, especially communication complexity problems \cite{vanDam,brassard,BS,IC,ML,noah,GYNI}.

It is interesting to see how the maximal CHSH value obtainable from polygon systems in the state $\phi^{AB}$ varies as the number of vertices $n$ of the polygon increases. The $n=4$ case is particularly simple. The optimal choice of measurements to violate the CHSH inequality is 
\begin{align}
  x = 0:& \{ e_1^A,e_3^A \}, & x = 1:& \{ e_2^A,e_4^A \}, & y = 0:& \{ e_2^B,e_4^B \}, & y = 1:& \{ e_1^B,e_3^B \}, 
\end{align}
and it can be verified from \eref{eq:nevencorrelations} that the correlations obtained give $S=4$. In other words, the maximally entangled state of two $n=4$ systems can act as a PR box. It follows that this state has the same information theoretic power that PR boxes are known to have. 

For general $n$, assume that Alice's measurement choices are of the form $\{ e^A_i, \bar e^A_i\}$ and Bob's of the form $\{ e^B_j,\bar e^B_j\}$. A lengthy but straightforward calculation gives the following analytic expressions. 
For even $n$,
\begin{equation}\label{CHSHeven}
S = r_n^2 \sum_{x,y=0,1} (-1)^{xy} \cos\left(\alpha_x-\beta_y\right),
\end{equation}
where as before, $\alpha_x = \frac{2\pi i_x}{n}$ and $\beta_y = \frac{(2j_y-1)\pi}{n}$.
For odd $n$,
\begin{equation}\label{CHSHodd}
S = \frac{2}{\left(1+r_n^2 \right)^2} \, \bigg| (r_n^2-1)^2 + 2\,r_n^2  \sum_{x,y=0,1}(-1)^{xy} \cos(\alpha_x-\beta_y) \bigg|,
\end{equation}
where $\alpha_x = \frac{2\pi i_x}{n}$ and $\beta_y = \frac{2\pi j_y}{n}$.
Maximizing these expressions over all possible choices for the angles $\alpha_i$ and $\beta_j$ gives the maximal violation achievable by local measurements on the maximally entangled state $\phi^{AB}$.
A detailed analysis of these expressions can be found in \ref{app:optimalchshsvalue}. \Fref{fig:chshpoly} shows the maximal CHSH value for the maximally entangled state of polygon systems as a function of $n$. 
\begin{figure}[ht]
 \centering
 \includegraphics[width=0.65 \linewidth]{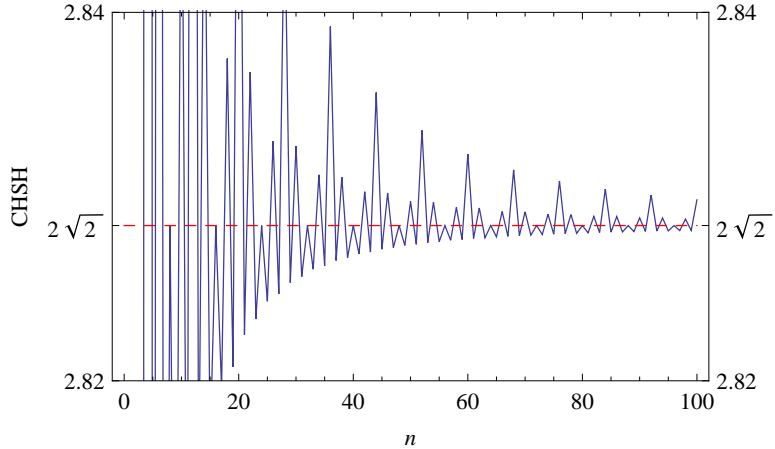}
 \caption{Maximal CHSH value from the maximally entangled state of two polygon systems as a function of the number of vertices $n$. Tsirelson's bound ($S \leq 2\sqrt{2}$) appears as a natural separation between the case of even $n$ and odd $n$.}
 \label{fig:chshpoly}
\end{figure}

The most important feature of \fref{fig:chshpoly} is that the correlations of even $n$ systems can always reach or exceed Tsirelson's bound, while the correlations of odd $n$ systems are always below Tsirelson's bound. Thus Tsirelson's bound appears as a natural separation between the correlations of these two different kinds of polygon state spaces. Sections \ref{sec:selfdualityandtsirelsonsbound} and \ref{polygonsrevisited} show why this is. Section \ref{sec:selfdualityandtsirelsonsbound} shows that for odd $n$, the maximally entangled state $\phi^{AB}$ belongs to a broad class of states we call \emph{inner product states}, and that all correlations obtainable from measurements on inner product states satisfy Tsirelson's bound. Section \ref{polygonsrevisited} goes further, and relates this to a fundamental geometric difference between polygons with even $n$ and odd $n$. In \fref{model}, the difference is seen in the fact that for odd $n$, the effect cone $V_+^*$ coincides with the state cone $V_+$, whereas for even $n$, the effect cone is isomorphic to the state cone but rotated through some angle. 

We have only considered correlations obtainable from the maximally entangled state $\phi^{AB}$. In principle there could be joint states other than the maximally entangled state which show stronger violations for some Bell inequalities. While this seems unlikely for the CHSH inequality, other Bell inequalities are known to be maximized by non-maximally entangled states in quantum mechanics \cite{nonmax}.

\subsection{The Braunstein-Caves inequalities}\label{sectionbc}
The Braunstein-Caves (or \emph{chained}) Bell inequalities \cite{chained} are similar to the CHSH inequality, but involve $N$ measurement settings on each system, rather than two. Let Alice's choice of measurement be $x$, and Bob's $y$, with $x,y \in \{1,\ldots, N \}$. Let the outcomes be $a,b \in \{0,1\}$. Local correlations satisfy
\begin{equation}
  S_N = \bigg|\sum_{j=1}^{N-1} (E_{j,j}+E_{j,j+1})  + E_{N,N} -E_{N,1}  \bigg| \leq 2N-2,
\end{equation}
where as before $E_{x,y} = P(0,0|x,y)+P(1,1|x,y)-P(0,1|x,y)-P(1,0|x,y)$. In the case $N=2$, this is equivalent to the CHSH inequality, up to relabelling of measurement settings.

The algebraic maximum of $S_N$ is $2N$. This maximum can be attained by performing measurements on the maximally entangled state of even $n$ polygon systems with $n=2N$. This state is thus tailor made for violating the Braunstein-Caves Bell inequalities. To see this, let Alice's and Bob's measurement choices be given by
\begin{align}
x &= i: \{ e^A_i, \bar e^A_i \}, \quad i=1,\ldots, N, \\
y &= j: \{ e^B_j, \bar e^B_j \}, \quad j=1,\ldots, N,
\end{align}
and note that (i) $E_{j,j}=1$ for $j=1,...,N$, (ii) $E_{j,j+1}=1$ for $j=1,...,N-1$ and (iii) $E_{N,1} = -1$. In the case $n\rightarrow \infty$, maximal violation of the Braunstein-Caves inequality is achieved in the limit of infinitely many settings. This is also true for a quantum mechanical maximally entangled state, as shown in Ref.~\cite{barrettkentpironio}.

In general, given a set of correlations $P(a,b|x,y)$, they can be written as a mixture 
\begin{equation}
P(a,b|x,y) = q \PNL(a,b|x,y) + (1-q) \Pc(a,b|x,y),  
\end{equation}
where $0\leq q \leq 1$, $P^{NL}(a,b|x,y)$ is a set of nonlocal correlations and $P^L(a,b|x,y)$ a set of local correlations. Suppose, however, that the correlations $P(a,b|x,y)$ return the maximum value $S_N$ for an appropriate Braunstein-Caves inequality. Then $q(S_N) +(1-q)(S_N-2)\geq S_N$, hence $q=1$. Therefore, the fact that the maximally entangled state of even $n$ polygon systems returns the maximum value for the appropriate Braunstein-Caves inequality indicates that there is no local part in the correlations with $N=n/2$ measurement settings. This was pointed out in the case of quantum systems in Ref.~\cite{barrettkentpironio,epr2}. As a further curiosity, if we did have access to these systems, they could be used for secure key distribution, using the protocol of Ref.~\cite{BHK}.

\subsection{Distillation}
\label{sec:distill}
So far, we have only considered correlations that can be produced by measuring a single copy of a bipartite polygon system. There remains the possibility that stronger correlations could be produced by performing local measurements on multiple bipartite pairs, and locally processing the data (there is a further possibility, involving entangled measurements across multiple copies on each side, which we do not discuss).

Consider the bipartite state $\phi^{AB}$ of two even $n$ polygon systems, and suppose that Alice and Bob are choosing from the measurements
\begin{equation}
x = 0: \{ e_1^A,\bar e_1^A \}, \quad x = 1: \{ e_2^A,\bar e_2^A \}, \quad y = 0: \{ e_1^B,\bar e_1^B \}, \quad y = 1: \{ e_2^B,\bar e_2^B \}, 
\end{equation}
with outcomes $a,b \in \{0,1\}$ as usual.
Recall that $E_{j,j}=1$ for $j=0,1$ and $E_{0,1}=1$. \Eref{eq:nevencorrelations} also gives $E_{1,0}=2\cos(\frac{2\pi}{n})-1$. The correlations produced can be written as a probabilistic combination of maximally nonlocal correlations (equivalent up to relabelling to the PR box correlations of \eref{PR1}), and another term which describes local correlations:
\begin{equation}\label{NLC}
  P_\epsilon(a,b|x,y) = \epsilon \PPR(a,b|x,y) + (1-\epsilon) \Pc(a,b|x,y).
\end{equation}
Here, $ 0 \leq \epsilon=1-\cos(\frac{2\pi}{n}) \leq 1$, $\PPR$ is given by
\begin{equation}\label{PR}
\PPR (a,b|x,y) =
\begin{cases}
\frac{1}{2} & \text{if $ a \oplus b = x(y \oplus 1)$} \\
0 & \text{otherwise}
\end{cases}
\end{equation}
and $\Pc$ is a set of local correlations given by
\begin{equation}\label{Pc}
\Pc (a,b|x,y) =
\begin{cases}
\frac{1}{2} & \text{if $ a \oplus b = 0$,} \\
0 & \text{otherwise.}
\end{cases}
\end{equation}
In Ref.~\cite{BS}, it is shown that all correlations of the form \eqref{NLC} with $0<\epsilon <1$ can be distilled into stronger correlations using a protocol that involves two copies of a bipartite system. Importantly, this protocol consists only of local processing and does not involve any communication. In the asymptotic limit of infinitely many copies of a bipartite system, the correlations \eqref{NLC} can be distilled to PR box correlations by iterating the protocol. Thus for any finite even $n$, the polygon systems produce correlations that can be distilled arbitrarily close to PR box correlations (since $\epsilon = 1-\cos(\frac{2\pi}{n})>0$). It is only in the limit $n \rightarrow \infty$ (the quantum case), that we get $\epsilon=0$ and thus lose the ability to distill PR box correlations.

The consequence of the above is that polygon systems with even and finite $n$ inherit the powerful communication properties of PR boxes as long as there are multiple copies of the maximally entangled state available. For instance, they collapse communication complexity \cite{vanDam}, allow for better than classical non-local computation \cite{noah}, violate information causality \cite{IC} and macroscopic locality \cite{ML}. Moreover, since the PR box can be considered as a unit of bipartite nonlocality \cite{unit,forster2}, it follows that any bipartite no-signalling probability distribution can be generated from multiple copies of polygon systems with even $n$. This is particularly surprising as in practice, an individual polygon system with even and very large $n$ would be very difficult to distinguish from one with odd $n$, and also from the quantum case, i.e. the disc that one gets in the limit $n\to\infty$. These toy theories thus show that practically indistinguishable theories can have fundamentally different limits to the non-local correlations they allow.

For polygon systems with odd and finite $n$, the situation is dramatically different, as seen in the next section.

\section{Bounds on correlations}\label{sec:selfdualityandtsirelsonsbound}
For even $n$ polygon systems, the maximally entangled state can produce arbitrarily strong nonlocal correlations, whereas for odd $n$ polygon systems, the nonlocality is highly constrained. The maximally entangled state of odd $n$ polygon systems cannot, for example, violate Tsirelson's inequality. This section shows that this is a consequence of a much more general result. 

We first introduce a class of bipartite states in general theories, which we call \emph{inner product states}. The main theorem establishes a strong constraint on the nonlocal correlations that can be produced from measurements on inner product states. One consequence is that inner product states cannot violate Tsirelson's inequality. The maximally entangled states of odd $n$ polygon systems are inner product states, hence the theorem explains what was only established by direct calculation above --- that these states do not violate Tsirelson's inequality. On the other hand, the maximally entangled states of even $n$ polygon systems are not inner product states, which is consistent with them producing arbitrary non-signalling correlations. We also show that all classical and quantum states are, in terms of non-local correlations, no stronger than an inner product state.

\subsection{Inner product states}\label{sec:innerproductstates}

Recall that a state cone $V_+$ is the set of unnormalized states of a system, and that these span a vector space $V$. An effect cone $V_+^*$ is the set of unnormalized measurement outcomes, and these span the vector space $V^*$. Given two systems $A$ and $B$, if the state cones $V^A_+$ and $V^B_+$ span vector spaces $V^A$ and $V^B$ respectively, then a joint state is an element of $V^A\otimes V^B$.

Call two distinct systems \emph{similar} if their state spaces are isomorphic. Examples of similar systems are two quantum mechanical qubits, or two classical trits, or two $n$-vertex polygon systems. For the rest of this section, assume a bipartite system composed of two similar subsystems $A$ and $B$. In this case, the respective state spaces and effect spaces can be identified, so that $V^A=V^B=V$, $(V^A)^*=(V^B)^*=V^*$, $u^A=u^B=u$, and so on. 
\begin{dfn}
A joint state $\omega^{AB}$ is \emph{symmetric} if $(e\otimes f)(\omega^{AB})=(f\otimes e)(\omega^{AB})$ for all measurement outcomes $e,f\in V_+^*$.
\end{dfn}
\begin{dfn}
A joint state $\omega^{AB}$ is an \emph{inner product state} if $\omega^{AB}$ is symmetric, and positive semidefinite, i.e., $(e\otimes e)(\omega^{AB}) \geq 0 \ \forall e \in V^*$.
\end{dfn}
Note that by definition of a joint state, it is always true that $(e\otimes e)(\omega^{AB}) \geq 0$ when $e\in V_+^*$, i.e., when $e$ is a valid effect. This is simply a statement of the fact that measurement outcome probabilities have to be greater than or equal to zero. The definition requires something stronger, which is that $(e\otimes e)(\omega^{AB}) \geq 0$ for any $e$ in the whole of the vector space $V^*$. 

\begin{example}
Any symmetric product state $\omega^{AB} = \omega \otimes \omega$ is an inner product state.
\end{example}
\begin{example}
Consider two classical systems, each of which is a \emph{nit}, taking values $\{1,\ldots, n\}$. A joint state is simply a joint probability distribution over nit values. 
Write the joint state as a matrix $P$, where $P_{ij}$ is the joint probability that $A=i$ and $B=j$. This is an inner product state iff the matrix $P$ is symmetric and positive semi-definite. In particular this includes any perfectly correlated state of the form
\begin{eqnarray*}
P_{ij} &=& 0 \quad\mathrm{if}\quad i\ne j\\
P_{ii} &=& q_i, \quad q_i \geq 0, \quad \sum_i q_i = 1.
\end{eqnarray*}
\end{example}
\begin{example}
Consider two polygon systems, each corresponding to a state space with $n$ vertices. Section \ref{sec:defpolybox} defined an analogue of a maximally entangled state $\phi^{AB}$. In the matrix representation of \eqref{definitionphi}, $\phi^{AB}$ is an inner product state if and only if the matrix is symmetric and positive semi-definite. Hence $\phi^{AB}$ is an inner product state for odd $n$, whereas for even $n$, $\phi^{AB}$ is not an inner product state.
\end{example}
\begin{example}\label{quantumexample}
The quantum case is slightly subtle. Given two qubits, the maximally entangled state 
\begin{equation}
\Phi^+ = \ket{\Phi^+}\bra{\Phi^+},\qquad \ket{\Phi^+} = \frac{1}{\sqrt{2}} \left( \ket{00} + \ket{11} \right)
\end{equation}
is symmetric but is not an inner product state, since if $\sigma_y$ is a Pauli spin matrix, then $(\sigma_y\otimes \sigma_y)(\Phi^+) = -1$. Consider the operator defined by $\tilde{\Phi} = (\id\otimes T)(\Phi^+)$, where $T$ is the linear map that takes an operator in $V^B$ to its transpose with respect to the computational basis. The new operator $\tilde{\Phi}$ is not a valid quantum state. It is locally positive but not globally positive, hence is not a density operator. But it is in the maximal tensor product of two qubits, and it is an inner product state. In fact, $\tilde{\Phi}$ predicts perfect correlation whenever Alice and Bob perform measurements in the same direction.
However, the two states are equivalent in terms of the non-local correlations they can produce (as was first shown in Ref.~\cite{Beigi09}).
\end{example}

Theorem \ref{corr:identitystatecorrelationsareinq1} below establishes a constraint on the nonlocal correlations that can be obtained from measurements on an inner product state. It may seem as if the definition of an inner product state is quite restrictive, given that an inner product state must be symmetric, for example, and given that the maximally entangled state $\Phi_+$ of two qubits is not included. This would diminish the interest of the theorem. However, suppose that a bipartite state $\omega^{AB}$ can be obtained from an inner product state via a transformation of one of its subsystems. Then any correlations obtained from $\omega^{AB}$ could also be obtained from an inner product state. Hence any restriction on the correlations from inner product states also applies to $\omega^{AB}$. Formally,  
\begin{thm}\label{equivstatesweaker}
Consider a joint state $\omega^{AB}$, which can be written in the form $\omega^{AB} = (\id\otimes \tau)(\sigma^{AB})$, for some $\tau: V_+\rightarrow V_+$ that takes normalized states to normalized states. Any correlations obtained from measurements on $\omega^{AB}$ can also be obtained from measurements on $\sigma^{AB}$.
\end{thm}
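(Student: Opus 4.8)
\section*{Proof proposal}

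The plan is to reproduce any correlations realized on $\omega^{AB}$ by keeping Alice's measurement unchanged and replacing Bob's measurement by its image under the dual of $\tau$. Recall that a set of correlations is a family of joint outcome probabilities $P(a,b|x,y) = (e^x_a \otimes f^y_b)(\omega^{AB})$, where $\{e^x_a\}_a$ is Alice's measurement for setting $x$ and $\{f^y_b\}_b$ is Bob's measurement for setting $y$. Since a physical transformation respects probabilistic mixtures and scaling, and since $V_+$ generates $V$, the map $\tau$ extends uniquely to a linear map on $V$ and thus possesses a well-defined adjoint $\tau^*\colon V^* \to V^*$ given by $\tau^*(f) = f \circ \tau$.

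The central computation is to transfer $\tau$ from the state onto Bob's effect --- the familiar passage from a Schr\"odinger to a Heisenberg picture. Writing $\sigma^{AB} = \sum_k a_k \otimes b_k$ and using linearity, one finds
\begin{equation}
(e \otimes f)\big((\id \otimes \tau)(\sigma^{AB})\big) = \sum_k e(a_k)\, f(\tau(b_k)) = (e \otimes \tau^*(f))(\sigma^{AB}).
\end{equation}
Applied to the measurement effects this yields $P(a,b|x,y) = (e^x_a \otimes \tau^*(f^y_b))(\sigma^{AB})$, so the identical numbers arise from $\sigma^{AB}$, provided that $\{\tau^*(f^y_b)\}_b$ is itself a legitimate measurement on Bob's system.

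Verifying exactly this is the heart of the argument, and the only place where the hypotheses on $\tau$ enter. First, each $\tau^*(f)$ is a proper effect: because $\tau$ maps $V_+$ into $V_+$ and normalized states to normalized states, it maps $\Omega$ into $\Omega$, whence $\tau^*(f)(\omega) = f(\tau(\omega)) \in [0,1]$ for every $\omega \in \Omega$, so $\tau^*(f) \in E(\Omega)$. Second, the transformed effects obey the completeness relation: by linearity of $\tau^*$ and normalization-preservation, $\sum_b \tau^*(f^y_b) = \tau^*\big(\sum_b f^y_b\big) = \tau^*(u)$, and $\tau^*(u)(\omega) = u(\tau(\omega)) = 1$ for all $\omega \in \Omega$ gives $\tau^*(u) = u$. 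Hence $\{\tau^*(f^y_b)\}_b$ is a valid measurement and the correlations are reproduced on $\sigma^{AB}$.

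I expect the main obstacle to be conceptual rather than computational: one must check that the dual map sends valid measurements to valid measurements, and this is precisely what forces the two assumptions on $\tau$ --- positivity ($V_+ \to V_+$) guarantees that each $\tau^*(f)$ is a proper effect, while normalization-preservation guarantees the completeness relation $\sum_b \tau^*(f^y_b) = u$. Once these are in place the conclusion is immediate. Since a set of correlations $P(a,b|x,y)$ is by definition generated by local measurements --- Alice on her subsystem, Bob on his --- no separate treatment of entangled joint measurements is required, and I would present only the local version.
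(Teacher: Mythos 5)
Your proposal is correct and follows essentially the same route as the paper: the paper likewise defines the adjoint map $\tau^\dagger(e)(\omega) = e(\tau(\omega))$, keeps Alice's measurement fixed, replaces Bob's measurement $\{f_b\}$ by $\{\tau^\dagger(f_b)\}$, and uses $\tau^\dagger(u)=u$ to check that this is a valid measurement reproducing the same joint probabilities. Your write-up is in fact slightly more explicit than the paper's, since it spells out the tensor-decomposition computation $(e\otimes f)\bigl((\id\otimes\tau)(\sigma^{AB})\bigr) = (e\otimes\tau^*(f))(\sigma^{AB})$ and verifies that each $\tau^*(f)$ is a proper effect, both of which the paper leaves implicit.
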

\begin{proof}
Define the adjoint map $\tau^{\dagger}: V_+^*\rightarrow V_+^*$ such that for any effect $e\in V_+^*$ and any state $\omega\in V_+$, 
\begin{equation}
(\tau^\dagger(e))(\omega) = e(\tau(\omega)). 
\end{equation}
Since $\tau$ takes normalized states to normalized states, $\tau^{\dagger}(u) = u$. Given a measurement $y$ on system $B$, with outcomes $\{f_1,\ldots, f_r\}$, let $y'$ be the measurement with outcomes $\{\tau^{\dagger}(f_1), \ldots, \tau^{\dagger}(f_r)\}$. Note that from $f_1+\cdots+ f_r = u$, and $\tau^{\dagger}(u) = u$, it follows that $\tau^{\dagger}(f_1) + \cdots + \tau^{\dagger}(f_r) = u$, as must be the case for $y'$ to be a valid measurement. Then measurements $x$ and $y$ on $\omega^{AB}$ have the same joint outcome probabilities as measurements $x$ and $y'$ on $\sigma^{AB}$. Hence, if a particular set of correlations can be obtained by performing measurements on $\omega^{AB}$, those same correlations can be obtained by performing different measurements on $\sigma^{AB}$.
\end{proof}

Further,
\begin{thm}\label{quantumstatesareinnerprod}
Given two $d$-dimensional quantum systems, any pure state $\rho^{AB}=\ket{\psi}\bra{\psi}$ can be written in the form $\rho^{AB} = (\id \otimes \tau)(\tilde{\rho}^{AB})$, where $\tau: V_+\rightarrow V_+$ takes normalized states to normalized states, and $\tilde{\rho}^{AB}$ is an inner product state.
\end{thm}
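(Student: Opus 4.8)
The plan is to use the Schmidt decomposition of the pure state $\ket{\psi}$ to exhibit an explicit local map $\tau$ that converts a suitable inner product state $\tilde{\rho}^{AB}$ into $\rho^{AB}$. Write $\ket{\psi} = \sum_{k=1}^d \sqrt{\lambda_k}\,\ket{k}_A\ket{k}_B$ with $\lambda_k \geq 0$ and $\sum_k \lambda_k = 1$, choosing the Schmidt bases as the computational bases on each side. The natural candidate for the inner product state is the symmetric, perfectly correlated state built from the same Schmidt coefficients, for instance $\tilde{\rho}^{AB} = \ket{\tilde\psi}\bra{\tilde\psi}$ with $\ket{\tilde\psi} = \sum_k \sqrt{\lambda_k}\,\ket{k}_A\ket{k}_B$ pre-processed by the partial transpose, along the lines of Example \ref{quantumexample}. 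The key point is that in the computational basis a maximally-correlated state has a real, symmetric matrix representation, and after composing with the transpose map $T$ it becomes positive semidefinite as a bilinear form on $V^*$, hence an inner product state.

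First I would verify the symmetry condition: for the maximally correlated quantum state the matrix $(e\otimes f)(\tilde\rho^{AB})$ is manifestly symmetric under $e\leftrightarrow f$ because its matrix representation with respect to the Hermitian basis is real and symmetric. Next I would check positive semidefiniteness in the strong sense required by the definition, namely $(e\otimes e)(\tilde\rho^{AB})\geq 0$ for \emph{all} $e\in V^*$, not just for valid effects; this is precisely what the partial transpose buys us, exactly as in the two-qubit case where $\tilde\Phi = (\id\otimes T)(\Phi^+)$ predicts perfect correlation for same-direction measurements. I would then identify the local map $\tau$ that recovers $\rho^{AB}$: it should be the composition of the transpose $T$ (to undo the partial transpose) with a local filtering/unitary operation adjusting the relative phases and weights so that $\ket{\tilde\psi}$ is mapped to $\ket{\psi}$. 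The crucial requirement from Theorem \ref{equivstatesweaker} is that $\tau$ take normalized states to normalized states, i.e.\ $\tau^\dagger(u)=u$, so I must ensure the map is trace-preserving on the $B$ subsystem.

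The main obstacle I anticipate is reconciling two competing demands on $\tau$: the transpose map $T$ is positive but not completely positive, yet it \emph{is} a legitimate transformation $V_+\to V_+$ at the level of a single system (it sends density operators to density operators), so it is admissible in this generalized-probabilistic sense even though it is forbidden as a physical quantum channel. I would need to argue carefully that $\tau$ as a map on the local cone $V_+$ is well-defined and norm-preserving, and that $(\id\otimes\tau)(\tilde\rho^{AB})$ genuinely reproduces $\rho^{AB}$ as an element of $V^A\otimes V^B$; the subtlety is purely that $\id\otimes\tau$ need not preserve global positivity, but that is irrelevant here since we are mapping the inner product state \emph{to} the quantum state, and the latter is known to be positive by hypothesis. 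A secondary check is handling degenerate or vanishing Schmidt coefficients, where the filtering part of $\tau$ must be defined by a limiting argument or by restricting to the support; I would treat this by noting that any invertible local transformation suffices on the support and extending by continuity or by a direct rank argument.
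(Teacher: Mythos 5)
Your proposal is correct and follows essentially the same route as the paper's own proof: Schmidt decomposition, taking $\tilde{\rho}^{AB}$ to be the partial transpose of the maximally-correlated-form state, checking symmetry and positive semidefiniteness of the resulting bilinear form on $V^*$, and letting $\tau$ be the transpose composed with a local unitary (the paper writes $\tau = U^{-1}\circ T$ for general Schmidt bases, whereas you absorb $U$ into the choice of computational basis so that $\ket{\tilde\psi}=\ket{\psi}$ and $\tau = T$ alone suffices). One small correction: no ``local filtering'' is needed or permissible --- the Schmidt coefficients of $\tilde\rho^{AB}$ and $\rho^{AB}$ already coincide, and a filtering operation would not take normalized states to normalized states, violating the hypothesis on $\tau$.
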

\begin{proof}
Using the Schmidt decomposition, every pure quantum state $\ket{\psi}$ can be written in the form:
\begin{equation}\label{eq:schmidtdecomposition}
\ket{\psi} = \sum_{i=1}^r \lambda_i \ket{a_i} \otimes \ket{b_i},
\end{equation}
where $r$ is the Schmidt rank, $\{\ket{a_i}\}$ and $\{\ket{b_i}\}$ are orthonormal bases and the $\lambda_i$ are real and positive. A unitary transformation $U$, on system $B$, which maps $\{\ket{b_i}\}$ to $\{\ket{a_i}\}$ gives
\[
\ket{\psi'} = \sum_{i=1}^r \lambda_i \ket{a_i} \otimes \ket{a_i}.
\]
Now let 
\[
\tilde{\rho}^{AB} = (\id \otimes T)(\ket{\psi'}\bra{\psi'}),
\]
where $T$ is the transpose map, acting on the $B$ system, defined with respect to the basis $\{\ket{a_i}\}$. Note that $\tilde{\rho}^{AB}$ is symmetric since for Hermitian operators $E$ and $F$,
\[
(E\otimes F)(\tilde{\rho}^{AB}) = \mathrm{Tr}[(E \otimes F)\tilde{\rho}^{AB}] = \sum_{ij} \lambda_i \lambda_j E_{ji} F_{ij} 
= (F\otimes E)(\tilde{\rho}^{AB}).
\]
Note also that $\tilde{\rho}^{AB}$ is positive semi-definite since for any Hermitian operator $E$,
\[
(E\otimes E)(\tilde{\rho}^{AB}) =  \mathrm{Tr}[(E \otimes E)\tilde{\rho}^{AB}] = \sum_{ij} \lambda_i \lambda_j E_{ji} E_{ij} = \sum_{ij} \lambda_i \lambda_j |E_{ji}|^2 \geq  0.
\]
Therefore $\tilde{\rho}^{AB}$ is an inner product state. The quantum state $\rho^{AB}$ can be written $\rho^{AB} = (\id \otimes \tau)(\tilde{\rho}^{AB})$, where $\tau$ is the transpose map followed by $U^{-1}$, which proves the theorem. 
\end{proof}
Now any correlations that can be obtained from measurements on a bipartite classical or quantum system, pure or mixed, can also be obtained from measurements on a pure quantum state of two $d$-dimensional systems for some $d$. This follows from the fact that mixed quantum states always have a purification on a larger Hilbert space. Combining this observation with theorems \ref{equivstatesweaker} and \ref{quantumstatesareinnerprod} gives
\begin{thm}\label{classicalquantumcorrelationsinnerprod}
Any correlations obtained from measurements on a bipartite, pure or mixed, classical or quantum system could also be obtained from measurements on an inner product state.
\end{thm}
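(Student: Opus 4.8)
The plan is to reduce every case --- mixed or pure, classical or quantum --- to the single case already handled by Theorems \ref{equivstatesweaker} and \ref{quantumstatesareinnerprod}, namely a pure state of two $d$-dimensional quantum systems. Once a given set of correlations is shown to arise from such a pure state $\rho^{AB}$, Theorem \ref{quantumstatesareinnerprod} writes that state as $(\id\otimes\tau)(\tilde{\rho}^{AB})$ with $\tilde{\rho}^{AB}$ an inner product state and $\tau$ norm-preserving, and Theorem \ref{equivstatesweaker} then transfers the correlations to $\tilde{\rho}^{AB}$. So the content of the proof lies entirely in the reductions, none of which requires real computation.

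First I would dispose of the classical case by observing that a classical system is a special case of a quantum one: a joint probability distribution over outcome pairs is represented by a density operator diagonal in a product basis, and classical measurements form a restriction of the quantum measurements. Hence any correlations obtainable classically are also obtainable from some bipartite quantum state with suitable product measurements, and the classical case folds into the quantum case.

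Next I would remove the word ``mixed''. Given a mixed bipartite quantum state $\rho^{AB}$, take a purification $\ket{\psi}$ on $H^A\otimes H^B\otimes H^E$ and assign the purifying ancilla $E$ to one of the parties, say Bob, so that $\ket{\psi}$ is a pure state of the two systems $A$ and $BE$. Every product measurement available on $\rho^{AB}$ remains available here --- Alice measures $A$ as before, and Bob performs his measurement on $B$ while ignoring $E$ --- so every correlation obtainable from $\rho^{AB}$ is also obtainable from $\ket{\psi}$. The systems $A$ and $BE$ need not have equal dimension, but each embeds isometrically into a $d$-dimensional Hilbert space with $d=\max(d_A,\, d_B d_E)$, and such an embedding only enlarges the set of available local measurements, hence can only increase the obtainable correlations. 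This brings us to a pure state of two equal-dimensional systems, exactly the hypothesis of Theorem \ref{quantumstatesareinnerprod}.

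The step I expect to need the most care is this purification-and-embedding reduction, precisely because the bipartite Alice/Bob split must be respected throughout: the purifying ancilla has to be handed to a single party so that the state stays a genuine bipartite pure state with the correct locality structure, and the dimension-matching must be carried out without secretly admitting any nonlocal measurements. Everything else is bookkeeping. With the reduction complete, chaining Theorems \ref{quantumstatesareinnerprod} and \ref{equivstatesweaker} as above finishes the argument.
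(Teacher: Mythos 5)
Your proposal is correct and follows essentially the same route as the paper: reduce everything to a pure state of two $d$-dimensional quantum systems via purification (with classical systems viewed as a special case of quantum ones), then chain Theorems \ref{quantumstatesareinnerprod} and \ref{equivstatesweaker}. The paper states this reduction in a single sentence, whereas you spell out the details it leaves implicit --- assigning the purifying ancilla to one party and matching dimensions by isometric embedding --- but the underlying argument is identical.
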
 
Hence as far as correlations go, the fact that we consider only inner product states is not nearly so restrictive as it looks. By extension, the results apply to all classical and quantum bipartite systems.

\subsection{The set $Q_1$}\label{sec:thesetq1}

The problem of characterizing those correlations which could in principle be produced by performing measurements on quantum systems, and those that cannot, is an interesting one. Tsirelson's inequality, which limits the possible violation of the CHSH inequality in quantum theory, was the first result in this direction. A great deal of progress is made in Refs.~\cite{NPA, momentproblem}, where the problem is reduced to the following form. A hierarchy of sets $Q_1, Q_2,\ldots$ is defined, such that each $Q_k$ is a proper subset of the set of all possible bipartite non-signalling correlations, and each $Q_k$ is strictly contained in its predecessor. For given correlations $P(a,b|x,y)$, and for each $k$, it is a semi-definite programming problem to determine whether $P(a,b|x,y)$ is contained in $Q_k$. Furthermore, a given set of correlations $P(a,b|x,y)$ can be obtained from measurements on quantum systems if and only if $P(a,b|x,y)$ is contained in $Q_k$ for some $k$. Hence the sets $Q_k$ become smaller as $k$ increases, until in the limit $k\rightarrow\infty$ they converge towards the set $Q$ of quantum correlations.

The set $Q_1$, which is the largest in the hierarchy, is of further significance. In Ref.~\cite{ML} it is shown that correlations in $Q_1$ satisfy a readily comprehensible physical principle called \emph{macroscopic locality}. For a precise description of what this means, see Ref.~\cite{ML}, but in a nutshell, the principle states that the coarse-grained statistics of correlation experiments involving a large number of particles should admit a description by a local hidden variable model. In other words, the set of microscopic correlations that satisfy the principle of macroscopic locality are those which are compatible with classical physics in a certain limit in which the number of particle pairs being tested is large, and only coarse-grained statistics, rather than settings and outcomes for every pair, are collected.
It is also known that $Q_1$ is closed under \emph{wiring} \cite{ML,Allcock}, in other words it is not possible to distill correlations in $Q_1$ to correlations outside $Q_1$ by performing measurements on a number of distinct pairs of systems, and locally manipulating the data. Finally, in the specific case of binary measurement choices and outcomes, all correlations in $Q_1$ respect Tsirelson's bound of $2\sqrt{2}$ for the CHSH scenario. The main theorem below states that correlations from measurements on inner product states are contained in the set $Q_1$. 

First, we give a formal definition of $Q_1$. Suppose that Alice and Bob share two systems in a bipartite state, and let Alice choose a measurement $x$ and Bob choose a measurement $y$. Up to now, when we discussed correlations, Alice's and Bob's outcomes were labelled $a$ and $b$, and correlations written $P(a,b|x,y)$. For the specific purpose of defining $Q_1$, however, it is more useful to label the measurement outcomes in such a way that outcomes of distinct measurements have different labels. Hence let the index $i$ range over all possible outcomes of all of Alice's measurement choices. For example, if Alice is choosing from $N$ possible measurements, each of which has $k$ possible outcomes, then $i$ takes values in $\{1,\ldots, kN\}$, with $i=1,\ldots, k$ the outcomes of the $x=1$ measurement, $i=k+1,\ldots, 2k$ the outcomes of the $x=2$ measurement, and so on. Let the same conventions apply to Bob's outcome, which is denoted $j$. With a slight abuse of notation, let $x(i)$ denote the unique measurement choice of Alice for which $i$ is a possible outcome. Similarly, $y(j)$. Write $P(i,j)$ for the probability of obtaining outcomes $i$ and $j$ when the measurements $x(i)$ and $y(j)$ are performed. Let $P_A(i)$ denote the marginal probability for Alice to obtain outcome $i$ when she performs measurement $x(i)$, and $P_B(j)$ denote the marginal probability for Bob to obtain outcome $j$ when he performs measurement $y(j)$.
\begin{dfn}[\cite{NPA,momentproblem,ML}]\label{q1def}
A set of correlations $P(i,j)$ is in $Q_1$ iff there exists a positive semi-definite matrix $\gamma$ of the form
\begin{equation}\label{eq:q1def1}
\gamma =
\begin{pmatrix}
1 & \vec{P}_A^T & \vec{P}_B^T \\
\vec{P}_A & \tilde{Q} & \tilde{P} \\
\vec{P}_B & \tilde{P}^T & \tilde{R}\\
\end{pmatrix},
\end{equation}
such that
\begin{enumerate}
\item $\vec{P}_A$ and $\vec{P}_B$ are the vectors of probabilities $P_A(i)$ and $P_B(j)$,
\item $\tilde{P}$ is a matrix with elements $\tilde{P}_{ij} = P(i,j)$,
\item $\tilde{Q}$ and $\tilde{R}$ are sub-matrices with diagonal elements $\tilde{Q}_{ii} = P_A(i)$ and $\tilde{R}_{jj} = P_B(j)$,
\item $\tilde{Q}_{ii'} = 0$ if $i\ne i'$, $x(i)=x(i')$,
\item $\tilde{R}_{jj'} = 0$ if $j\ne j'$, $y(j)=y(j')$.
\end{enumerate}
In words, the last two conditions state that elements of $\tilde{Q}$ and $\tilde{R}$ corresponding to different outcomes of the \emph{same} measurement must be zero. The remaining off-diagonal elements of $\tilde{Q}$ and $\tilde{R}$ can be chosen freely.   
\end{dfn}

\subsection{The main theorem}
\label{sec:Q1Poly}
\begin{thm}\label{corr:identitystatecorrelationsareinq1}
Consider two similar systems, whose joint state is an inner product state. All correlations that can be obtained from local measurements lie in $Q_1$.
\end{thm}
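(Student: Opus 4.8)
The plan is to exhibit, for any correlations arising from local measurements on the inner product state $\omega^{AB}$, an explicit certificate matrix $\gamma$ of the form demanded by Definition \ref{q1def}, and to verify it is positive semidefinite. The starting observation is that an inner product state defines a symmetric, positive semidefinite bilinear form on $V^*$ via $\langle e, f\rangle := (e\otimes f)(\omega^{AB})$: symmetry of this form is precisely the symmetry of $\omega^{AB}$, and $\langle e, e\rangle \geq 0$ for all $e \in V^*$ is precisely the inner-product-state condition. Consequently, for any finite family of vectors in $V^*$ the matrix of their pairwise values $\langle\cdot,\cdot\rangle$ is a Gram matrix, hence positive semidefinite.

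First I would take the family $\{u\} \cup \{e_i\} \cup \{f_j\}$, where $e_i$ ranges over the outcome effects of all of Alice's measurements and $f_j$ over all of Bob's, and form the Gram matrix $\gamma_0$ with entries $\langle\cdot,\cdot\rangle$. This $\gamma_0$ is positive semidefinite, and most required entries already come out correctly: $\langle u,u\rangle = (u\otimes u)(\omega^{AB}) = 1$ by normalization; $\langle u, e_i\rangle = (e_i\otimes u)(\omega^{AB}) = P_A(i)$ and likewise $\langle u, f_j\rangle = P_B(j)$ using symmetry; and the cross block gives $\langle e_i, f_j\rangle = (e_i\otimes f_j)(\omega^{AB}) = P(i,j) = \tilde P_{ij}$. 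The only entries that do not yet match Definition \ref{q1def} sit in the Alice block $\tilde Q$ and Bob block $\tilde R$, whose diagonals must equal the marginals and whose same-measurement off-diagonals must vanish, whereas $\gamma_0$ delivers $\langle e_i, e_i\rangle$ and $\langle e_i, e_{i'}\rangle$ instead.

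Next I would repair exactly these entries by adding a correction $\mathrm{diag}(0, \Delta_A, \Delta_B)$, where $\Delta_A$ is block diagonal with one block per Alice measurement and $\Delta_B$ likewise for Bob. Within the block for a measurement $x$ with outcome effects $\{e_a\}$, set the weights $M_{aa'} := (e_a\otimes e_{a'})(\omega^{AB})$ and let the block of $\Delta_A$ be the graph Laplacian $(\Delta_A)_{aa} = \sum_{a'\neq a} M_{aa'}$, $(\Delta_A)_{aa'} = -M_{aa'}$. Using $\sum_a e_a = u$ one checks $P_A(a) = \sum_{a'} M_{aa'}$, so the corrected diagonal $\langle e_a, e_a\rangle + (\Delta_A)_{aa}$ equals $P_A(a)$ and the corrected same-measurement off-diagonal $\langle e_a, e_{a'}\rangle - M_{aa'}$ equals $0$, exactly as required; cross-measurement entries are untouched and may be left free. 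The crux is that each block is a Laplacian with nonnegative weights $M_{aa'} = (e_a\otimes e_{a'})(\omega^{AB}) \geq 0$ (these are genuine outcome probabilities), so each block, hence $\Delta_A$ and $\Delta_B$, is positive semidefinite.

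Finally, $\gamma = \gamma_0 + \mathrm{diag}(0,\Delta_A,\Delta_B)$ is a sum of two positive semidefinite matrices, hence positive semidefinite, and by construction has precisely the form of Definition \ref{q1def}, placing the correlations in $Q_1$. I expect the main obstacle to be the verification at the heart of the argument: recognizing that the discrepancy between the Gram-matrix entries and the entries forced by Definition \ref{q1def} is exactly a weighted graph Laplacian per measurement, and that its positive semidefiniteness is guaranteed by nonnegativity of the joint probabilities together with the normalization $\sum_a e_a = u$. Everything else is bookkeeping once the bilinear form $\langle e,f\rangle = (e\otimes f)(\omega^{AB})$ is in hand.
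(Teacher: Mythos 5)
Your proposal is correct and follows essentially the same route as the paper's proof: the Gram matrix $\gamma_0$ is the paper's $\tilde\gamma$, the repair of the $\tilde Q$, $\tilde R$ blocks is the paper's $\delta = \gamma - \tilde\gamma$, and your observation that each per-measurement block is a weighted graph Laplacian with nonnegative weights $(e_a\otimes e_{a'})(\omega^{AB})$ is exactly the paper's decomposition of each block into the edge matrices $M^{mn}$, which is the standard proof of Laplacian positive semidefiniteness. The only implicit point worth making explicit in both versions is that symmetry of $\omega^{AB}$ is what makes the weights (and hence the Laplacian) symmetric.
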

\begin{proof}
It is sufficient to show that for any set of correlations generated by measurements on an inner product state, there exists a matrix $\gamma$ of the form \eqref{eq:q1def1}, which is symmetric, positive semi-definite, and has the feature that entries in the blocks $\tilde{Q}$ and $\tilde{R}$ corresponding to different outcomes of the same measurement are zero.

Consider correlations generated by measurements on an inner product state $\omega^{AB}$. Using the notation introduced in section \ref{sec:thesetq1}, let $e_i$ be the effect corresponding to Alice's measurement outcome $i$, and $f_j$ the effect corresponding to Bob's measurement outcome $j$. Suppose that $i$ ranges from $1,\ldots,n^A$ and $j$ from $1,\ldots,n^B$. Define a vector of effects $g = (u,e_1,\dots,e_{n^A},f_1,\dots,f_{n^B})$, and denote the entries $g_1=u, g_2=e_1,\ldots,g_{1+n^A+n^B}=f_{n^B}$. Define the $(1+n^A+n^B)\times(1+n^A+n^B)$ matrix $\tilde{\gamma}$ such that $\tilde{\gamma}_{kl} = (g_k \otimes g_l)(\omega^{AB})$. From the fact that $\omega^{AB}$ is an inner product state, it follows directly that $\tilde{\gamma}$ is a symmetric and positive semi-definite matrix \cite{Bhatia}.

Now define a matrix $\gamma$ of the form \eqref{eq:q1def1}, with $\gamma_{kl} = \tilde{\gamma}_{kl}$ for all $k,l$ except for the following elements of the sub-matrices $\tilde{Q}$ and $\tilde{R}$:
\begin{enumerate}
\item $\tilde{Q}_{ii} = P_A(i)$, and $\tilde{R}_{jj} = P_B(j)$.
\item $\tilde{Q}_{ii'} = 0$ if $i\ne i'$, $x(i)=x(i')$,
\item $\tilde{R}_{jj'} = 0$ if $j\ne j'$, $y(j)=y(j')$.
\end{enumerate}
By construction, $\gamma$ satisfies conditions (i)-(v) of Definition \ref{q1def}, and symmetry of $\gamma$ follows from symmetry of $\tilde{\gamma}$. It remains to show that $\gamma$ is positive semi-definite.

To this end, let $\delta = \gamma - \tilde{\gamma}$ and note that $\delta$ is of the form
\begin{equation}
\delta =
\begin{pmatrix} 0 & \cdots & 0 \\
\vdots & \delta_Q & \tilde{0} \\
0 & \tilde{0}^T & \delta_R \\
\end{pmatrix},
\end{equation}
where $\delta_Q$ is an $n_A\times n_A$ sub-matrix, $\delta_R$ is an $n_B\times n_B$ sub-matrix, and $\tilde{0}$ is the $n_A \times n_B$ matrix with all entries $0$. Since both $\gamma$ and $\tilde{\gamma}$ are symmetric, $\delta$ is also symmetric. We will show that $\delta_Q$ and $\delta_R$ are positive semi-definite. It follows that $\delta$ is positive semi-definite. Since $\gamma = \delta + \tilde{\gamma}$, it follows that $\gamma$ is also positive semi-definite.

Note that $(\delta_Q)_{ii'} = 0$ for $x(i)\ne x(i')$. It follows that $\delta_Q$ is block diagonal, with each block corresponding to a particular measurement choice of Alice. Consider a particular block, corresponding to a measurement with, say, $r$ outcomes. It is of the form
\begin{equation}
M = \begin{pmatrix}
e_1\otimes u - e_1 \otimes e_1 & -e_1 \otimes e_2 & \cdots & -e_1\otimes e_r \\
- e_2 \otimes e_1 & e_2 \otimes u - e_2\otimes e_2 & \cdots & -e_2\otimes e_r \\
 & & \vdots & \\
 -e_r \otimes e_1 & -e_r \otimes e_2 & \cdots & e_r \otimes u - e_r \otimes e_r
\end{pmatrix}
(\omega^{AB}).
\end{equation}
Using $e_1 + \cdots + e_r = u$, this matrix can be decomposed into a sum of $(r^2-r)/2$ matrices
\begin{equation}
  M = \sum_{n=2}^r \sum_{m=1}^{n-1} M^{mn},
\end{equation}
where all entries of the matrices $M^{mn}$ are $0$, except for
\begin{align}
(M^{mn})_{mm} &= (M^{mn})_{nn} = (e_m\otimes e_n)(\omega^{AB}) \\
(M^{mn})_{mn} &= (M^{mn})_{nm} = -(e_m\otimes e_n)(\omega^{AB}).
\end{align}
Each $M^{mn}$ is manifestly positive semi-definite, hence $M$ is positive semi-definite. Since each block of $\delta_Q$ is positive semi-definite, $\delta_Q$ is also positive semi-definite. A similar argument shows that $\delta_R$ is also positive semi-definite. Therefore $\delta$ and $\gamma$ are positive semi-definite. This concludes the proof. 
\end{proof}

\begin{cor}\label{corollary}
Consider two systems, whose joint state is of the form $\omega^{AB} = (\id \otimes \tau) (\sigma^{AB})$, where $\tau: V_+\rightarrow V_+$ takes normalized states to normalized states and $\sigma^{AB}$ is an inner product state. All correlations obtainable from measurements on $\omega^{AB}$ lie in $Q_1$.
\end{cor}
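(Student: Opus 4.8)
The plan is to obtain the corollary immediately by chaining the two preceding results, Theorem \ref{equivstatesweaker} and the main theorem (Theorem \ref{corr:identitystatecorrelationsareinq1}). The hypotheses of the corollary are arranged so that no fresh work is needed: the map $\tau$ is assumed to take normalized states to normalized states, which is exactly the condition under which Theorem \ref{equivstatesweaker} applies, and $\sigma^{AB}$ is assumed to be an inner product state, which is exactly the hypothesis of Theorem \ref{corr:identitystatecorrelationsareinq1}.

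Concretely, I would first invoke Theorem \ref{equivstatesweaker} with the given decomposition $\omega^{AB} = (\id \otimes \tau)(\sigma^{AB})$. That theorem guarantees that any set of correlations obtainable from local measurements on $\omega^{AB}$ can also be realized by local measurements on $\sigma^{AB}$ --- Alice keeps her measurements and Bob replaces each outcome effect $f$ by its image $\tau^{\dagger}(f)$ under the adjoint, the map $\tau^{\dagger}$ sending valid measurements to valid measurements because $\tau^{\dagger}(u)=u$. I would then apply Theorem \ref{corr:identitystatecorrelationsareinq1} to $\sigma^{AB}$: since $\sigma^{AB}$ is an inner product state, every correlation arising from local measurements on it lies in $Q_1$. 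As the correlations from $\omega^{AB}$ form a subset of those from $\sigma^{AB}$, and the latter lie in $Q_1$, the correlations from $\omega^{AB}$ lie in $Q_1$ as well.

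There is no real obstacle here, as the statement is a formal consequence of results already established. The only point worth a moment's attention is that the transfer of correlations supplied by Theorem \ref{equivstatesweaker} is exactly the inclusion needed, so that membership in $Q_1$ passes through unchanged; this works because $Q_1$ is defined purely in terms of the numerical correlation data $P(i,j)$, $P_A(i)$ and $P_B(j)$, which are preserved under the relabelling of Bob's measurements by $\tau^{\dagger}$.
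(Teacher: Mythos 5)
Your proposal is correct and follows exactly the paper's own route: the paper likewise obtains the corollary as an immediate consequence of Theorem \ref{equivstatesweaker} (transferring correlations from $\omega^{AB}$ to the inner product state $\sigma^{AB}$) combined with Theorem \ref{corr:identitystatecorrelationsareinq1}. Your additional remark that membership in $Q_1$ is preserved because it depends only on the numerical data $P(i,j)$, $P_A(i)$, $P_B(j)$ is a fine, if unnecessary, elaboration of the same argument.
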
  
\begin{proof}
This is immediate from theorem \ref{corr:identitystatecorrelationsareinq1} and theorem \ref{equivstatesweaker}.
\end{proof}
Theorem \ref{classicalquantumcorrelationsinnerprod} then implies that all correlations from bipartite classical and quantum states lie in $Q_1$. This was known already of course from Refs.~\cite{NPA, momentproblem}. One could view the theorem and corollary as an independent proof of this fact.

\section{Polygons revisited}\label{polygonsrevisited}

It has already been observed that given two $n$-vertex polygon systems, the maximally entangled state $\phi^{AB}$, defined in section \ref{sec:defpolybox}, is an inner product state if and only if $n$ is odd. Theorem \ref{corr:identitystatecorrelationsareinq1} states that correlations obtained from measurements on an inner product state lie in the set $Q_1$, which means in particular that they respect Tsirelson's bound for the CHSH inequailty. This explains why Tsirelson's bound is satisfied by the odd $n$ polygon systems, and is consistent with violation of Tsirelson's bound by the even $n$ polygon systems.

This section relates these observations to simple geometrical properties of the state spaces of polygon systems. A quick glance at figures \ref{model} and \ref{fig:poly3d} reveals an obvious difference between the odd $n$ and even $n$ cases. For odd $n$, the effect cone $V_+^*$ coincides with the state cone $V_+$. For even $n$ on the other hand, the effect cone is isomorphic to the state cone, but is rotated by some non-zero angle. This simple observation lies at the heart of why it is only the maximally entangled states of odd $n$ polygon systems that are inner product states, and hence why it is only these that must satisfy Tsirelson's bound.   

The fundamental difference between the odd $n$ and even $n$ state spaces can be stated more formally as follows. First
\begin{dfn}[weakly self-dual]\label{def:weaklyselfdual}
A system is \emph{weakly self-dual} iff the state and effect cones are isomorphic.
\end{dfn}
All of the polygon state spaces are weakly self-dual. The isomorphisms are simply the rotations and improper rotations around the $z$ axis by $(1 + 2k) \pi/n,\ k\in\{0,\dots,n-1\}$ if $n$ is even and by $2k \pi/n,\ k\in\{0,\dots,n-1\}$ if $n$ is odd. 

The odd $n$ polygon state spaces, on the other hand, satisfy a stronger condition, whereby there are additional restrictions on the isomorphism connecting $V_+^*$ and $V_+$.
\begin{dfn}[strongly self-dual]\label{def:stronglyselfdual}
A system is \emph{strongly self-dual} iff there exists an isomorphism $T: V^*_+ \to V_+$ which is symmetric and positive semi-definite, i.e., $f[T(e)] = e[T(f)]$ for all $e,f \in V^*$, and $e[T(e)] \geq 0$ for all $e \in V^*$. 
\end{dfn}
Given the representation of sections \ref{polygonsystems} and \ref{sec:defpolybox}, the identity map is an example of such an isomorphism. The odd $n$ polygon state spaces are strongly self-dual, but the even $n$ are not.

The concepts of strong and weak self-duality have appeared earlier in the literature, for example in Ref.~\cite{teleport}. Weak self-duality is intimately related to the operational tasks of probabilistic remote state preparation (steering) and teleportation \cite{steering,teleport}.

Now we can relate these properties of individual systems to the bipartite maximally entangled state $\phi^{AB}$. Notice that given two similar systems, any isomorphism $T: V^*_+ \to V_+$ corresponds to a bipartite state $\omega_T^{AB}$ via
\begin{equation}
(e \otimes f)(\omega_T^{AB}) = \frac{f[T(e)]}{u[T(u)]} .
\end{equation}
The state defined is normalized by construction and is locally positive since $0 \leq f[T(e)]/u[T(u)] \leq 1$ for all $e,f\in E(\Omega)$. Intuitively, $\omega_T^{AB}$ is defined so that if Alice performs a measurement and obtains outcome $e$, then Bob's unnormalized collapsed state, conditioned on that outcome, is $T(e)$. 

In the special case that the individual systems are strongly self-dual and the isomorphism $T$ has the additional properties required by definition \ref{def:stronglyselfdual}, then the induced state $\omega_T^{AB}$ is symmetric and positive semi-definite, hence it is an inner product state. This is the case for the maximally entangled state $\phi^{AB}$ of odd $n$ polygon systems, defined in \eref{definitionphi}, where $\phi^{AB}$ corresponds to a map $T$ which is simply the identity map. It follows that for odd $n$, correlations from $\phi^{AB}$ lie in $Q_1$. 

In the case that individual systems are weakly but not strongly self-dual, the maximally entangled state corresponds to an isomorphism $T$, but there is no such $T$ with the additional properties of symmetry and positive semi-definiteness, hence the maximally entangled state is not an inner product state. This is the case for the maximally entangled state $\phi^{AB}$ of the even $n$ polygon systems, defined in \eref{definitionphi}, where $\phi^{AB}$ corresponds to a map $T$ which is a rotation in $\mathbb{R}^3$ by $\pi/n$. This is why for even $n$, correlations from $\phi^{AB}$ need not lie in $Q_1$.

\section{Correlations outside of $Q_1$}\label{sec:Correlationsoutsideofq1}

Correlations obtained from the maximally entangled state of two odd $n$ polygon systems must be contained in $Q_1$, and this has been seen to be related to the fact that the individual systems are strongly self-dual. It is natural to ask whether the correlations obtained from \emph{any} joint state of strongly self-dual subsystems must also lie in $Q_1$. An explicit counterexample shows that this is not the case.

Consider a strongly self-dual system with normalized extremal states
\begin{align*}
 \omega_1&=(1, 0, 1)^T & \omega_2&=(0, 1, 1)^T & \omega_3&=(-1, 0, 1)^T\\
 \omega_4&=(-1, -1, 1)^T & \omega_5&=(1, -1, 1)^T ,
\end{align*}
and normalized ray extremal effects
\begin{align*}
 e_1 &= \frac{1}{2} (1, 0, 1)^T & e_2 &= \frac{1}{2} (0, 1, 1)^T & e_3 &= \frac{1}{2} (-1, 0, 1)^T\\
 e_4 &= \frac{1}{3} (-1, -1, 1)^T & e_5 &= \frac{1}{3} (1, -1, 1)^T & u &= (0,0,1)^T .
\end{align*}
The state space for this system looks something like a house and is depicted in \fref{fig:house}. 
\begin{figure}[tb]
 \centering
 \includegraphics[width=0.35 \linewidth]{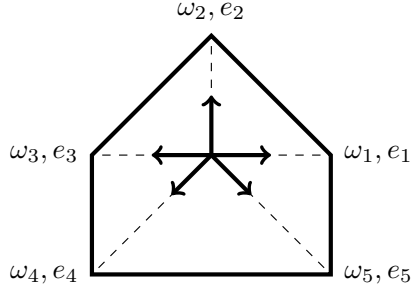}
 \caption{The house-shaped state space is strongly self-dual.}
 \label{fig:house}
\end{figure}

We have explicitly calculated all extremal states in the maximal tensor product of two such systems.
One of these joint states can be written as
\begin{equation}
  \left(\begin{array}{ccc} -1 & -\frac{1}{4} & -\frac{1}{2} \\ \frac{1}{4} & -\frac{1}{2} & -\frac{1}{4} \\ \frac{1}{2} & -\frac{1}{4} & 1\end{array}\right) ,
\end{equation}
where we have used the same representation as a $3\times 3$ matrix that was introduced in section~\ref{sec:defpolybox}. 
This state is extremal in the maximal tensor product, but is not an inner product state.
With a suitable choice of measurements, correlations can be produced which violate Uffink's quadratic inequality \cite{Uff02}
\begin{equation}\label{eq:uffink}
(E_{0,0}+E_{1,0})^2 + (E_{0,1}-E_{1,1})^2 \leq 4.
\end{equation}
In particular the measurement choices
\begin{align}
  x = 0:& \{e_5, u-e_5\}, & x=1:& \{e_3, u-e_3\}, & y=0:& \{e_2, u-e_2\}, & y=1:& \{e_3, u-e_3\}
\end{align}
give
\begin{equation}
(E_{0,0}+E_{1,0})^2 + (E_{0,1}-E_{1,1})^2 = \frac{17}{4} > 4 .
\end{equation}
However, satisfaction of Uffink's inequality is known to be a necessary condition for membership of $Q_1$
\cite{QMbound}; hence these correlations cannot lie in $Q_1$.

Although these correlations violate Uffink's inequality and lie outside of $Q_1$, they do not violate Tsirelson's bound for the CHSH inequality. In fact, we have not been able to find a joint state of two strongly self-dual subsystems that violates the CHSH inequality beyond Tsirelson's bound. This leads us to conjecture that Tsirelson's bound holds for every theory with strongly self-dual subsystems.

\section{Discussion}\label{discussion}
One way of viewing the difference between classical and quantum systems is that the structure, or shape, of the space of possible states of a system is different. For example in the case of a classical trit, the state space is the space of probability distributions over trit values, which is geometrically a triangle. In the case of a qubit, the state space is the Bloch ball. This work considers a very general setting in which a whole range of probabilistic models can be defined, with the classical and quantum theories as special cases. There is little constraint on the state space, except that it is assumed to be convex, and joint systems are assumed to satisfy a no-signalling principle and a principle of local tomography. The aim is to investigate the nonlocal correlations that can be produced by measurements on entangled systems in these models, and to compare and contrast with the classical and quantum cases. 

The main theorem, with its corollary, states that correlations from a broad class of bipartite states in probabilistic theories cannot be arbitrarily nonlocal --- they are constrained to obey the principle of \emph{macroscopic locality}, or equivalently to lie within the set $Q_1$, which means in particular that they satisfy Tsirelson's bound for violation of the CHSH inequality. This theorem extends to all bipartite quantum states, which explains why quantum mechanics cannot violate macroscopic locality or Tsirelson's bound. 

The work has also revealed an intimate and intricate relationship between the shape of the state space for an individual system, and the strength of the nonlocal correlations that can be obtained from two systems in an entangled state. This is illustrated by a family of models, in each of which the state space for a single system is a regular polygon with $n$ vertices. Given two such systems, there is an analogue of a maximally entangled state. It turns out that the strength of nonlocal correlations generated by this state depends dramatically on the parity of the number of vertices $n$ of the local polygon. If $n$ is even, maximally nonlocal correlations can be generated, including those that violate macroscopic locality. If $n$ is odd, however, the maximally entangled state respects macroscopic locality. This is in turn explained by the fact that odd $n$ polygons have a geometric property known as strong self-duality, while even $n$ polygons do not.

It would be natural to think that \emph{all} bipartite states of strongly self-dual subsystems would respect macroscopic locality, but the house-shaped counterexample shows that this is not the case. An interesting open question, therefore, is the following: What additional property of local state spaces would ensure that all bipartite states give correlations which respect macroscopic locality? One suggestion is the constraint that for any ray extremal effect, there is a unique state on which this effect will occur with certainty. This property is very attractive from a physical point of view. It allows a natural definition of the post-measurement states of these effects, such that repeating a measurement reproduces the same outcome. This extra constraint is indeed not satisfied by the house model, since the effect $e_1$ occurs with certainty for both states $\omega_1$ and $\omega_5$, but it is satisfied by odd $n$ polygon models.
Another possibility that seems to be plausible is that strong self-duality together with the property that all extremal states of the local systems can be transformed into one another reversibly might limit the set of possible correlations to the ones compatible with macroscopic locality.

Finally, it is worth emphasizing that two theories which have almost identical local state spaces can lead to dramatically different nonlocal correlations. In particular, given any finite level of accuracy, it is always possible to find a polygon model with an even and sufficiently large number of vertices $n$, which is locally indistinguishable from the quantum-like case, where the state space is a disc. Nevertheless, while quantum correlations are restricted, any non-signalling correlations can be distilled in the former model by using multiple copies of the maximally entangled state.

\ack
We thank Andreas Winter, Volkher Scholz, Markus M\"uller and Cyril Branciard for insightful discussions. JB is supported by an EPSRC Career Acceleration Fellowship. We acknowledge financial support from the German National Academic Foundation. NB is supported by the UK EPSRC.

\appendix
\setcounter{section}{0}

\section{Optimal CHSH value}
\label{app:optimalchshsvalue}
In the main text, we gave expressions for the maximal CHSH value returned by measurements on a maximally entangled state of two $n$-vertex polygon systems. The expression for even $n$ is given in \eqref{CHSHeven}, and for odd $n$, in \eref{CHSHodd}. The choice of angles that maximize these quantities is not unique.
We will see below that we have to take two different sets of optimal angles into account.
\begin{table}[h]
\caption{\label{tab:idealanglesodd}Optimal angles}
\begin{indented} \item[]
  \begin{tabular}{ccccc}
   \br
   & $\alpha^*_0$ & $\alpha^*_1$ & $\beta^*_0$ & $\beta^*_1$\\
   \mr
   Set $1$ & $0$ & $\frac{\pi}{2}$ & $\frac{\pi}{4}$ & $-\frac{\pi}{4}$\\\ms
   Set $2$ & $0$ & $\frac{\pi}{2}$ & $-\frac{3\,\pi}{4}$ & $\frac{3\,\pi}{4}$\\
   \br
  \end{tabular}
 \end{indented}
 \end{table}

 Note that the optimization has been performed without any restriction on the values of the angles $\alpha^*_x$ and $\beta^*_y$. However, due to the polygon structure of our model, only specific angles, corresponding to extremal effects, are admissible. Thus the optimal CHSH values are obtained by taking the extremal effects which are closest to the optimal angles.
\begin{table}[t]
\caption{\label{tab:chshpoly} Analytical expression for the maximal CHSH-violation of polygon boxes}
\footnotesize
\begin{tabular}{ccccc}
 \br
 $x$		&	$\Delta\alpha_1$	&	$\Delta\beta_0$		&	$\Delta\beta_1$	&	$S$\\
 \mr
 $0$ \rule{0pt}{4mm}	&	$0$			&	$\frac{\pi}{n}$			&	$\frac{\pi}{n}$			&	$2 \sqrt{2}$\\[2mm]
 $1$		&	$\frac{-\pi}{2 n}$	&	$\frac{-\pi}{4 n}$	&	$\frac{\pi}{4 n}$	&	 $\frac{2}{\left(1+\sec\left(\frac{\pi}{n}\right)\right)^2} \, \left[ 1 + \sec\left(\frac{\pi}{n}\right) \left( 2 \cos\left(\frac{n+3}{4 n} \, \pi\right)+6 \sin\left(\frac{n+1}{4 n} \, \pi\right) + \sec\left(\frac{\pi}{n}\right) - 2 \right)\right]$\\[2mm]
 $2$		&	$\frac{\pi}{n}$		&	$\frac{\pi}{2 n}$	&	$\frac{-\pi}{2 n}$	&	$\sec\left(\frac{\pi}{n}\right) \, \left[ 3 \cos(\frac{n+2}{4 n} \,  \pi) + \sin\left(\frac{n+6}{4n} \, \pi\right) \right]$\\[2mm]
 $3$		&	$\frac{\pi}{2 n}$	&	$\frac{\pi}{4 n}$	&	$\frac{-\pi}{4 n}$	&	$\frac{-2}{\left(1+\sec(\frac{\pi}{n})\right)^2} \left[1 - \sec\left(\frac{\pi}{n}\right) \left(6 \cos\left(\frac{n+1}{4 n} \, \pi\right)+ 2 \sin\left(\frac{n+3}{4 n} \, \pi\right)- \sec\left(\frac{\pi}{n}\right)\right)\right]$\\[2mm]
 $4$		&	$0$			&	$0$	&	$0$	&	$2 \sqrt{2} \, \sec(\frac{\pi}{n})$\\[2mm]
 $5$		&	$\frac{-\pi}{2 n}$	&	$\frac{-\pi}{4 n}$	&	$\frac{\pi}{4 n}$	&	$\frac{-2}{\left(1+\sec(\frac{\pi}{n})\right)^2} \left[1 - \sec\left(\frac{\pi}{n}\right) \left(6 \sin\left(\frac{n+1}{4 n} \, \pi\right)+ 2 \cos\left(\frac{n+3}{4 n} \, \pi\right)- \sec\left(\frac{\pi}{n}\right)\right)\right]$\\[2mm]
 $6$		&	$\frac{\pi}{n}$		&	$\frac{-\pi}{2 n}$	&	$\frac{\pi}{2 n}$	&	$\sec(\frac{\pi}{n}) \, \left[ \cos\left(\frac{n+6}{4 n} \, \pi\right) + 3 \sin\left(\frac{n+2}{4n} \, \pi\right) \right]$\\[2mm]
 $7$		&	$\frac{\pi}{2 n}$	&	$\frac{\pi}{4 n}$	&	$\frac{-\pi}{4 n}$	&	 $\frac{2}{\left(1+\sec\left(\frac{\pi}{n}\right)\right)^2} \, \left[ 1 + \sec\left(\frac{\pi}{n}\right) \left( 2 \sin\left(\frac{n+3}{4 n} \, \pi\right)+6 \cos\left(\frac{n+1}{4 n} \, \pi\right) + \sec\left(\frac{\pi}{n}\right) - 2 \right)\right]$\\[2mm]
 \br
\end{tabular}
\end{table}
The deviation from the optimal angles will be called $\Delta\alpha_0, \Delta\alpha_1, \Delta\beta_0, \Delta\beta_1$. Without loss of generality we set $\Delta\alpha_0$ to $0$. A detailed analysis reveals a total of eight classes of deviation angles characterized by the remainder $x = n \mod 8$ of the division of $n$ by $8$. For a free choice of angles both sets in \tref{tab:idealanglesodd} lead to the same maximum value of the CHSH-coefficient. Whether the available extremal effects are closer to the angles of set 1 or set 2, however, depends on the number of vertices. It turns out that for even $n$ as well as for $x \in \{1,7\}$ this is the case for set $1$, whereas for $x \in \{3,5\}$ the smallest derivation can be achieved to set $2$. The maximal CHSH value for each polygon system is given by the following parameters for \eref{CHSHeven} and \eref{CHSHodd}:
\begin{eqnarray*}
 \beta_y &= \beta^*_y + \Delta\beta_y\\
 \alpha_x &= \alpha^*_x + \Delta\alpha_x
\end{eqnarray*}

The eight classes can clearly be seen in \fref{fig:chshpoly}. The analytic expressions for the maximal CHSH value as a function of the number of vertices $n$ and the remainder $x$ are given in \tref{tab:chshpoly}.

\section*{References}

\end{document}